\newtheorem{theorem}{Theorem}
\newenvironment{case}[1]
  {\innercustomthm}
  {\endinnercustomthm}
\newtheorem{definition}{Definition}
\newtheorem{assumption}{Assumption}
\newtheorem{remark}{Remark}
\newtheorem*{example}{Example}
\newtheorem{lemma}{Lemma}
\newcommand{\ut}[1]{\underaccent{\tilde}{#1}}
\renewcommand{\vec}[1]{\ut{#1}}
\DeclareMathAlphabet{\mymathbb}{U}{BOONDOX-ds}{m}{n}
\begin{document}

\title{
  Adaptive Relaxation based Non-Conservative Chance Constrained Stochastic MPC 
{\footnotesize \textsuperscript{}}
}
\author{Avik~Ghosh, 
        Cristian~Cortes-Aguirre, 
        Yi-An~Chen, 
        Adil~Khurram, 
        and Jan~Kleissl 
        \vspace{-1em}
\thanks{A.~Ghosh \textit{(corresponding author)}, C.~Cortes-Aguirre, Y.~Chen and A.~Khurram are with the Department of Mechanical and Aerospace Engineering, University of California, San Diego, CA, 92093 USA (UCSD) \\
email:\{avghosh,ccortesa,yic002,akhurram\}@ucsd.edu. \\
J.~Kleissl is the Director of the Center for Energy Research in the Department of Mechanical and Aerospace Engineering at UCSD. \\
email: jkleissl@ucsd.edu.}}

\maketitle

\begin{abstract}

Chance constrained stochastic model predictive controllers (CC-SMPC) trade off full constraint satisfaction for economical plant performance under uncertainty. Previous CC-SMPC works are over-conservative in constraint violations leading to worse economic performance. Other past works require a-priori information about the uncertainty set, limiting their application. 
This paper considers a discrete 
LTI system with hard constraints on inputs and chance constraints on states, with unknown uncertainty distribution, statistics, or samples. This work proposes a novel adaptive online update rule to relax the state constraints based on the time-average of past constraint violations, 
to achieve reduced conservativeness in closed-loop. Under an ideal control policy assumption,
it is proven that the time-average of constraint violations asymptotically converges to the maximum allowed violation probability. 
The 
method is applied for optimal battery energy storage system (BESS) dispatch in a grid connected microgrid with PV generation and load demand, with chance constraints on BESS state-of-charge (SOC). \textcolor{black}{Realistic simulations show the superior electricity cost saving potential of the proposed method as compared to the traditional economic MPC without chance constraints, 
and a state-of-the-art approach 
with chance constraints. We satisfy the chance constraints non-conservatively in closed-loop, effectively trading off increased cost savings with minimal adverse effects on BESS lifetime.}
\end{abstract}

\begin{IEEEkeywords}
Stochastic model predictive control, chance constraints, forecast uncertainty, discrete LTI systems, uncertainties, non-conservative, microgrids, battery energy storage systems
\end{IEEEkeywords}

\nomenclature{VRES}{Variable renewable energy sources}
\nomenclature{PV}{Photovoltaic}
\nomenclature{BESS}{Battery energy storage system}
\nomenclature{MPC}{Model predictive control}
\nomenclature{SMPC}{Stochastic MPC}
\nomenclature{MG}{Microgrid}
\nomenclature{JCC}{Joint chance constraints}
\nomenclature{NCDC}{Non-coincident demand charge}
\nomenclature{OPDC}{On-peak demand charge}
\nomenclature{\textcolor{black}{NCDP}}{\textcolor{black}{Non-coincident demand peak}}
\nomenclature{\textcolor{black}{OPDP}}{\textcolor{black}{On-peak demand peak}}
\nomenclature{SOC}{State-of-charge}
\nomenclature{LTI}{Linear time invariant}
\nomenclature{$x$}{State}
\nomenclature{$u$}{Control input}
\nomenclature{$w$}{Uncertainty}
\nomenclature{$\hat w$}{Width of the critical region}
\nomenclature{$s$}{Control input constraint vector}
\nomenclature{$m$}{Control input dimension}
\nomenclature{$g$}{State chance constraint vector}
\nomenclature{\textcolor{black}{$c$}}{\textcolor{black}{Control input coupling vector}}
\nomenclature{$A$}{System state transition matrix}
\nomenclature{$B$}{System control input matrix}
\nomenclature{$E$}{System state uncertainty matrix}
\nomenclature{$F$}{Control input uncertainty matrix}
\nomenclature{$\mathbb{P}$}{Probability measure}
\nomenclature{$\mathcal{F}_t$}{Filtration}
\nomenclature{$S$}{Control input constraint matrix}
\nomenclature{$M$}{Control input coupling matrix}
\nomenclature{$G$}{State chance constraint matrix}
\nomenclature{$\alpha$}{Maximum probability of violation of state constraints}
\nomenclature{$N$}{MPC prediction horizon length}
\nomenclature{$\tilde h$}{Adaptive state constraint tightening parameter}
\nomenclature{$h$}{Adaptive state constraint relaxing parameter}
\nomenclature{$k, t$}{Time index}
\nomenclature{$V$}{State constraint violation tracker}
\nomenclature{$Y$}{Time-average of state constraint violations}
\nomenclature{$Z$}{Absolute difference between $\alpha$ and $Y$}
\nomenclature{$\gamma$}{Constant of proportionality in online $h$ update rule}
\nomenclature{$\kappa$, $\kappa'$}{Critical region}
\nomenclature{$n$}{State dimension}
\nomenclature{$p$}{Uncertainty dimension / Probability of violation of state constraints}
\nomenclature{$q$}{Control input constraint vector dimension}
\nomenclature{$d$}{Control input coupling vector dimension}
\nomenclature{$r$}{State chance constraint vector dimension}
\printnomenclature
    
\section{Introduction}\label{intro}
\subsection{Motivation}\label{motivation}

Currently, there is great emphasis on integrating variable renewable energy sources (VRES), such as wind and PV generators into the electricity grid, with the goal of de-carbonizing power production. There is, however, an intermittent nature to VRES, which 
can potentially lead to power imbalance in the electric grid, thereby risking grid stability~\cite{Kou2015}. Battery energy storage systems (BESS) can minimize power fluctuations caused by the integration of VRES into the grid~\cite{Kou2015}, and can additionally be used for energy arbitrage, peak load shaving, valley-filling, and ancillary services. 
However, to maximize benefits from installing BESS, optimal BESS scheduling strategies need to be devised to maximize electricity bill savings, while providing services to the grid.

It is possible to optimally dispatch BESS, utilizing Model predictive control (MPC) based scheduling algorithms that include grid constraints. 
However, uncertainty in forecasts can significantly reduce performance and should be taken into account when formulating MPCs.
Classical open-loop min-max formulation based robust MPC can be used to factor in uncertainty but it leads to over-conservative solutions which may not be economical from an operational perspective~\cite{Stanford}. Other variations of robust MPC such as closed-loop min-max formulation (commonly known as ``Feedback MPC") suffer from prohibitive complexity~\cite{Stanford}. Tube-based MPC requires specification of a bounded uncertainty set a-priori~\cite{IEEE_1} (a problem in common with robust MPC), which may be difficult to specify \textcolor{black}{non-conservatively} for a complex practical system such as a VRES integrated MG, which involves a variety of forecasts. \textcolor{black}{An over-conservative uncertainty set negatively impacts economical system performance.} 

Stochastic MPC (SMPC) methods based on chance constraints strike a trade off between economic operation and full constraint satisfaction~\cite{IEEE_1}. Chance constraints allow for the MPC to operate in a more economical way by respecting a maximum probability of constraint violation. The superior economic performance, lower complexity, and weaker assumption requirements of chance constrained SMPC is desirable for BESS operation in VRES intensive microgrids (MG), especially under uncertainty in VRES and load forecasts, and is thus the primary focus of this work.

 \vspace{-0.7em}
\subsection{Literature Review}\label{lit_review} 

Chance constrained SMPC algorithms have found applications in problems involving building climate control~\cite{IEEE_1,munoz2018stochastic,long2020iterative,korda2014stochastic}, optimal power flow~\cite{IEEE_2}, and optimal microgrid (MG) dispatch~\cite{guo2017model,Liu2017,Yuan2021,Garifi2018,Gulin2015,ding2022distributionally,aghdam2020stochastic,wang2023optimal, ciftci2019data, Guo2018, ghosh2023adaptive}. Chance-constrained SMPC problems are solved by converting them into an approximate deterministic form. If the uncertainties are Gaussian, or follow other known distributions~\cite{special_distribution_1,special_distribution_2}, standard procedures exist to convert the stochastic problem into a deterministic one. \textcolor{black}{However, in practical scenarios such as VRES and load forecasts, the uncertainty distributions, and additionally, uncertainty statistics (moments like mean, variance, skewness, etc.) may be unknown and can vary with time (i.e., seasonally/yearly).} Other methods of reformulating the SMPC problem into a deterministic one, such as using Chebyshev inequalities~\cite{chebyshev} require a-priori knowledge about the uncertainty statistics (mean and covariance), while using Chernoff bounds suffer from high conservatism~\cite{IEEE_1}. Sampling based approaches~\cite{scenario_1,scenario_2} suffer from high computational demand and may require a prohibitive number of samples~\cite{IEEE_1}. 

Chance constraints are generally enforced pointwise-in-time within an MPC prediction horizon, without including past behavior of the system, which can also lead to over-conservativeness (i.e., less than desired constraint violations) in closed-loop \cite{korda2014stochastic,munoz2018stochastic}. However, reducing over-conservativeness is of paramount importance for MG operators to reduce electricity costs. \textcolor{black}{Thus, in this work we re-interpret the pointwise-in-time chance constraints as time-average of violations (or time-average of some loss function of violations) in closed-loop similar to~\cite{IEEE_1,munoz2018stochastic,time_avg,korda2014stochastic,capone2024online}, and first focus our literature review specifically on such online chance constrained SMPC methods with theoretical advancements. The re-interpretation keeps the spirit of occasional constraint violations of the original chance constraint\cite{munoz2018stochastic} while keeping memory of past behavior of the system to aid in reducing over-conservativeness in closed-loop. }


\textcolor{black}{The work in~\cite{korda2014stochastic} 
adaptively relaxed and tightened the MPC state constraints online aided by the amount of violations quantified by a loss function empirically weighted averaged over time. }
The authors defined a family of stochastic robust control invariant (SRCI) sets for implementing their control online and proved that the empirical weighted average loss is bounded either in expected value or robustly with probability 1, and derived bounds on the convergence time. However, drawbacks of the work include high computational cost to parameterize the SRCI sets, and a-priori knowledge about the distribution/statistics of the uncertainty. 

\textcolor{black}{The authors in~\cite{time_avg} 
adaptively relaxed the MPC state constraints online 
based on the time-average of (i) the number of constraint violations in the first method, and (ii) a loss metric based on a convex loss function of constraint violations in the other method.} Instead of providing asymptotic bounds on constraint violations as done in previous works like~\cite{IEEE_1,munoz2018stochastic}, the authors provided stronger robust bounds in closed-loop over finite time periods.
A practical limitation of~\cite{time_avg} for our application (economic MG dispatch) is the assumption of an objective function that is composed of stage costs with quadratic penalties on predicted control inputs and state deviations from an a-priori defined robust positively invariant target set. Construction of a non-conservative robust positively invariant target state set is difficult. Additionally, electricity costs due to grid imports in a MG with BESS cannot be expressed by stage costs with quadratic penalties on control inputs and state deviations, because economic stage costs are not necessarily positive definite with respect to the target set of states and/or control inputs~\cite{rawlings2012fundamentals}.\footnote{Economic MG dispatch involves usage of the electricity cost function directly as the objective function of the MPC controller. Electricity costs incurred by the MG to the utility involve time-of-use energy charges and demand charges. Energy charges (\$/kWh) are incurred based on the volumetric import of electricity from the grid, while demand charges (\$/kW) are incurred based on the maximum load import from the grid over the month. \textcolor{black}{Two distinct time-of-use demand charges are used: one based on maximum grid imports for the whole month called non-coincident demand peak (NCDP), added on top of maximum grid import between 16:00-21:00 h of all days of the month, called on-peak demand peak (OPDP). The demand charges associated with NCDP and OPDP are called non-coincident demand charges (NCDC), and on-peak demand charges (OPDC) respectively.} For commercial and industrial customers, demand charge costs are typically 30-70\% of the monthly electricity costs~\cite{Demand_charges}.} \textcolor{black}{Moreover, the uncertainty is incorporated by bounding the predicted states in state tubes which require a-priori specification of the uncertainty set, and the computation time is similar to a corresponding robust MPC problem which is more expensive than the nominal MPC.}


\textcolor{black}{The authors in~\cite{IEEE_1,munoz2018stochastic,capone2024online} 
  adaptively tightened the state constraints online during MPC computations
  : (i) using an update rule based on the time-average of past state constraint violations in~\cite{IEEE_1,munoz2018stochastic}, 
  and (ii) by iteratively employing a data-driven Gaussian process binary regression based approach depending on the observed state constraint violations in the training data in~\cite{capone2024online}. 
  In \cite{munoz2018stochastic}, using stochastic approximation, the authors also proved the convergence of the time-average of constraint violations in probability to the allowable `least conservative' level. A few limitations of \cite{munoz2018stochastic} are the a-priori assumption of the uncertainty distribution, and the strong assumption of terminal stability region of the state in closed-loop, which is unrealistic for economic MG dispatch using BESS. While~\cite{capone2024online} relaxes the requirement of a-priori knowledge of the uncertainty distribution, iteratively learning the final optimal tightening parameter is data-intensive. The entire solution framework exhibits significantly more computation cost for satisfying the chance constraints in the long run 
  as compared to the nominal MPC. 
  For our application,~\cite{capone2024online} may be unable to perform economically or may cause significant violation of the chance constraints if the underlying uncertainty distribution changes with time (as the final optimal tightening parameter is learned from past data), which is undesirable for economic MG operation.} 

\textcolor{black}{Some of the limitations of \cite{korda2014stochastic,time_avg, munoz2018stochastic,capone2024online} are avoided by~\cite{IEEE_1} which can be applied to systems with unknown uncertainty distribution/statistics. Also,~\cite{IEEE_1} does not need the assumptions of terminal stability of the state and the type of the objective function (except convexity assumptions), and the implementation of the SMPC is computationally inexpensive and has similar computation cost as that of a nominal MPC. The above mentioned properties make the work in~\cite{IEEE_1} ideal for economic MG dispatch, and forms the primary reference on which we develop our work.} 

In \cite{IEEE_1}, the authors developed an adaptive state constraint tightening rule that allows the time-average of violations of the system to converge to the maximum allowable violation probability 
in closed-loop under an ideal control policy assumption (which may be unmet for practical implementation). 
However, as clarified by the authors, the convergence was argued intuitively without rigorous proof. 
\textcolor{black}{Also, despite its advantages,~\cite{IEEE_1} is still not robust to significant violation of chance constraints under time-varying uncertainty distribution.
}

While significant theoretical advancements have been made to develop SMPC with varying degrees of simplifying assumptions, computational cost, and ability to avoid over-conservativeness in closed-loop chance constraint satisfaction, there exists a gap in the literature of exploiting these non-conservative methods for economic MG dispatch under uncertainty.  
\textcolor{black}{Some recent applied works involving economic MG dispatch in VRES intensive grids with chance constraints are explored in~\cite{guo2017model,Liu2017,Yuan2021,Garifi2018,Gulin2015,ding2022distributionally,aghdam2020stochastic,wang2023optimal, ciftci2019data}.}

None of the works~\cite{guo2017model,Liu2017,Yuan2021,Garifi2018,Gulin2015,ding2022distributionally,aghdam2020stochastic,wang2023optimal, ciftci2019data} considered demand charges in their electricity cost. The work in~\cite{guo2017model} required extra steps to generate scenarios of uncertainty at every time-step for the MPC prediction horizon and considered the scenarios to be Gaussian. The works~\cite{Liu2017,Garifi2018,Gulin2015,aghdam2020stochastic,wang2023optimal} considered the uncertainties to be Gaussian or other commonly known distributions, which significantly limits the practical application of these studies. \textcolor{black}{
The authors in~\cite{Yuan2021,ding2022distributionally} employed ambiguity sets to model the uncertainties from historical data, and employed distributionally robust chance constrained optimization (DRCC). The authors in~\cite{ciftci2019data} used adaptive kernel density estimation to estimate the nonparametric uncertainty distribution for VRES from historical data, and adjusted the confidence levels according to estimated uncertainties to ensure constraint satisfaction within predefined confidence levels.} However, the approaches in ~\cite{Yuan2021,ciftci2019data,ding2022distributionally} are data-intensive and the performance is substantially influenced by the quality and volume of historical data available.

%
To tackle the aforementioned problems,~\cite{Guo2018} and the authors of this paper in a previous work~\cite{ghosh2023adaptive} presented an online adaptive SMPC model inspired by~\cite{IEEE_1}. \textcolor{black}{The authors in~\cite{Guo2018,ghosh2023adaptive} minimized the VRES integrated MG operating cost, and satisfied chance constraints on states (BESS SOC) 
in closed-loop without making any assumption about the probability distribution or statistics of the uncertainty.} To further reduce over-conservativeness, 
the works~\cite{ghosh2023adaptive,Guo2018} employed online adaptive constraint relaxation in the nominal MPC. 
However, the online adaptive constraint relaxation rules used in~\cite{Guo2018} and~\cite{ghosh2023adaptive} are based on intuition having no convergence guarantees or theoretical analyses, and both the works are application specific. From an economic MG dispatch perspective, only~\cite{ghosh2023adaptive} included demand charges while~\cite{Guo2018} did not. The authors in~\cite{Guo2018} used sample historical data of uncertainties for initial constraint relaxation which~\cite{ghosh2023adaptive} avoided by practical engineering approximation. \textcolor{black}{Additionally,~\cite{Guo2018,ghosh2023adaptive}
considered aggregate constraint violations, without any preference for the time when violations should occur. For economic MG dispatch with demand charges, it becomes critical, if necessary, to preferentially be able to violate BESS state constraints during a predefined on-peak period from 16:00 to 21:00~h, to reduce grid import power peaks, as OPDC 
are charged on top of NCDC
.}

\textcolor{black}{The present work is an extension of the previous work~\cite{ghosh2023adaptive} by the authors, presented in a generic discrete linear time invariant (LTI) setting with additional convergence properties and proofs, related theoretical analyses, and additional case studies.
%
The proposed online adaptive SMPC (OA-SMPC) minimizes a generic convex cost function over a finite receding horizon, subject to hard input constraints and 
chance constraints on states. 
After presenting the theoretical results of the OA-SMPC, a case study is presented for a grid-connected MG operation with PV, load, and BESS using realistic data for a full year of operation in an economic MPC (EMPC) framework. The performance of the OA-SMPC is compared with a traditional EMPC without chance constraints, and a state-of-the-art approach from the literature with chance constraints having similar computational cost~\cite{IEEE_1}. The OA-SMPC outperforms both the methods with respect to the cost saving potential and non-conservative satisfaction of chance constraints.}

\vspace{-0.5em}
\subsection{Contributions}\label{contributions}
The contributions of the present work are as follows:
\begin{enumerate}
    \item 
    \textcolor{black}{To the best of the author's knowledge, 
    the present work's 
    adaptive state constraint relaxation framework is limited in the literature as compared to the more common adaptive state constraint tightening.} Under the novel adaptive relaxation rule of the present work, 
    it is proven that the time-average of the constraint violations asymptotically converges to the maximum allowable violation probability under an ideal control policy assumption similar to~\cite{IEEE_1}. However, a rigorous proof is provided here which was not provided in~\cite{IEEE_1}. \textcolor{black}{Also, for practical implementation (i.e., without the simplifying ideal control policy assumption), while the proposed method cannot guarantee the aforementioned convergence, it still encourages it.}
    
    
    \item \textcolor{black}{The present work also 
    proves that the 
    the time-average of the constraint violations exhibits martingale-like behavior asymptotically
    for practical implementation
    .}

    \item The present work does not require either any a-priori assumption about the probability distribution of the uncertainty set or its statistics, or sample uncertainties from historical data. \textcolor{black}{The present work is also robust to significant violation of chance constraints under time-varying uncertainty distribution for practical implementation, provided an additional post-processing step is incorporated.}
    
    \item \textcolor{black}{The present work incorporates operational adjustments in the online adaptive relaxation rule to account for temporal preference in state constraint violation, which is critical for economic MG dispatch.} Additionally, unlike the methods in~\cite{Guo2018,IEEE_1}, the present method prevents excessive overcharging/overdischarging of the BESS to correct for large forecast uncertainties in real-time by the post-processing step, which otherwise might harm the BESS and leave the MG vulnerable for future demand peaks.
    
    \item The majority of the earlier works for chance constrained SMPC based economic MG dispatch 
    presented results over a short time (24~h), or one or two months. 
    However, for MG operators, it is important to have at-least year-long studies to determine how the algorithm performs under realistic seasonal variations in loads, VRES generation, and forecasts, a gap which the present work fills. 
\end{enumerate}
The rest of the paper is organized as follows.
Section~\ref{prelim} presents the notations and standard definitions. Section~\ref{problem} introduces the original SMPC problem formulation with chance constraints, approximates the original formulation to frame the OA-SMPC formulation, along with presenting the convergence proofs and related theoretical analysis. Section~\ref{case_study} presents the case study for a realistic grid connected MG with PV, load, and BESS, with results and discussions in Section~\ref{results}. Section~\ref{conclusions} concludes the paper summarizing the takeaways of the study.


\vspace{-0.5em}
\section{Mathematical preliminaries}\label{prelim}
\subsection{Notations}\label{notations}

The set of $n$-tuple of real numbers is denoted by $\mathbb{R}^{n}$. Positive and negative real number sets are denoted by $\mathbb{R}_{0+}$ and $\mathbb{R}_{0-}$, respectively. The set of natural numbers including 0 is denoted by $\mathbb{N}$. 
A set of consecutive natural numbers $\{i,i\!+\!1,\ldots,j\}$ is denoted by $\mathbb{N}_{i}^{j}$. The $n$-tuple of ones is denoted by $\mymathbb{1}^n$. States, control inputs and uncertainties are denoted by $x \in \mathbb{X} \subseteq {R}^n$, $u \in \mathbb{U} \subset \mathbb{R}^m$, and $w \in \mathbb{W} \subset \mathbb{R}^p$ respectively. The prediction horizon of the MPC is denoted by $N \in \mathbb{N}$. Actual states at time $t \in \mathbb{T} \subseteq \mathbb{N}$ are denoted by $x(t)$, while predicted states, obtained at $t$ by MPC computation $k \in \mathbb{N}_1^N$ time-steps in the future are denoted by $x(t\!+\!k|t)$. Similarly, predicted control inputs over the MPC prediction horizon are denoted by $u(t\!+\!k|t)$, with $k \in \mathbb{N}_0^{N-1}$. An ordered collection of vectors (such as states) over the MPC prediction horizon obtained at time $t$ is denoted by bold letters, $\textbf{\text x}(t+1) := \Bigl(x(t\!+\!1|t),x(t\!+\!2|t),\ldots,x(t\!+\!N|t)\Bigr)$. For matrices $A$ and $B$ of equal dimensions, the operators $\{<,\leq,=,>,\geq\}$ hold component wise. The right inverse of a matrix $A \in \mathbb{R}^{m \times n}$ with rank $m<n$ is denoted by $A^{\dagger}$.
The $i^{\rm th}$ row, and the element from the $i^{\rm th}$ row and $j^{\rm th}$ column of a matrix $A$ is denoted by $A_i$ and $A_{ij}$ respectively, while the $i^{\rm th}$ element of a vector $x$ is denoted by $x_i$, unless mentioned otherwise. \textcolor{black}{A vector of the first $a \in \mathbb{N}$ elements of a vector $x$ is denoted by $x_{1:a}$.} The expected value of a random variable $Z$ is denoted by $\mathbb{E}[Z]$. $|x|$ denotes the 1-norm of a vector $x$. \textcolor{black}{The `logical not', and `logical and' operators are denoted by $\neg$ and $\land$ respectively.} 

\subsection{ Standard Definitions}\label{definitions}

\begin{definition}[\textbf{Filtered probability space~\cite{williams1991probability}}]\label{prob_space}
A filtered probability space is defined by $(\Omega, \mathcal{F},\{\mathcal{F}_t\}, \mathbb{P})$. $(\Omega, \mathcal{F}, \mathbb{P})$ is a probability triple with sample space $\Omega$, $\sigma$-algebra (event space) $\mathcal{F}$, and probability measure $\mathbb{P}$ on 
$(\Omega, \mathcal{F})$. $\mathcal{F}_t$ is a filtration, which is an increasing family of sub $\sigma$-algebras of $\mathcal{F}$ such that $\mathcal{F}_s \subseteq \mathcal{F}_t \subseteq \mathcal{F}$, $\forall t\geq s$, where $t,s \in \mathbb{T}$.  
\end{definition}

\begin{definition}[\textbf{Almost surely~\cite{williams1991probability}}]\label{as}
An event $E\in\mathcal{F}$ happens almost surely if $\mathbb{P}(E)=1$. It is denoted by a.s.
\end{definition}

\begin{definition}[\textbf{Adapted stochastic process~\cite{williams1991probability}}]\label{adapted}
A stochastic process $Z :=(Z(t) : t> 0)$, is called adapted to the filtration $\{\mathcal{F}_{t}\}$ if $Z(t)$ is $\mathcal{F}_t$ measurable $\forall t$.
\end{definition}

\begin{definition}[\textbf{Supermartingale~\cite{williams1991probability}}]\label{stochastic}
A stochastic process $Z$ is called a discrete-time supermartingale relative to $(\{\mathcal{F}_t\}, \mathbb{P})$ if it satisfies the following:
\begin{enumerate}[label=(\alph*)]
    \item $Z$ is an adapted process,
    \item $\mathbb{E}[|Z(t)|] < \infty$, $\forall t$,
    \item $\mathbb{E}[Z(t+1)|\mathcal{F}_t] \leq Z(t)$, a.s. $\forall t$.
    \end{enumerate}
    \textcolor{black}{A discrete-time martingale $Z$ relative to $(\{\mathcal{F}_t\}, \mathbb{P})$ is defined similarly, with (c) replaced by $\mathbb{E}[Z(t+1)|\mathcal{F}_t] = Z(t)$, a.s. $\forall t$.}
    
\end{definition}

\begin{definition}[\textbf{Monotone convergence theorem for decreasing sequence\cite{bartle1964elements}}]\label{monotone}
Let $X=(x_n: n\in \mathbb{N})$ be a sequence of real numbers which is monotonically decreasing in the sense that $x_{n+1}\leq x_n, \forall n$, then the sequence converges if and only if it is bounded, and in which case $\lim_{n\to\infty}  x_n=\inf \{x_n\}$. 
\end{definition}

\section{Problem formulation}\label{problem}
\subsection{System Description}\label{original_system}

The dynamics of the discrete LTI system are governed by
\begin{equation} \label{dynamics_real}
\begin{aligned}
&x(t+1) = Ax(t)+Bu(t)+Ew(t), \qquad \forall t,
\end{aligned}
\end{equation}
where $A \in \mathbb{R}^{n \times n}$, $B \in \mathbb{R}^{n \times m}$, and $E \in \mathbb{R}^{n \times p}$.

\begin{assumption}[\textbf{System}]\label{assumption_system}
(a) 
At each time $t$, a measurement of the state is available. (b) The set of admissible control inputs $\mathbb{U}$ and states $\mathbb{X}$ are polytopes containing the origin.   
\end{assumption}

\begin{assumption}[\textbf{Uncertainties}]\label{assumption_uncertainty}
The set of uncertainties $\mathbb{W}$ is bounded and contains the origin.   
\end{assumption}


\noindent In this setup, $\mathcal{F} = \sigma (\{w:w(t) \in \mathbb{W}\}:t \in \mathbb{T})$, and $\mathcal{F}_t = \sigma (\{w(s):w(s) \in \mathbb{W}\}:s < t)$. The system is subject to hard control input constraints and chance constraints on states. The control input constraints are formulated as,
\begin{equation} \label{input hard_real}
\begin{aligned}
&Su(t) \leq s,  \qquad \forall t,
\end{aligned}
\end{equation}
where $S \in \mathbb{R}^{q \times m}$, $s \in \mathbb{R}^{q}$. The time-varying equality constraints coupling the control inputs are formulated as,
\begin{equation} \label{equal_real}
\begin{aligned}
&Mu(t) = \textcolor{black}{c(t)}+Fw(t),  \qquad \forall t,
\end{aligned}
\end{equation}
 where $M \in \mathbb{R}^{d \times m}$, $\textcolor{black}{c} \in \mathbb{R}^{d}$, and $F \in \mathbb{R}^{d \times p}$. In previous works like \cite{IEEE_1}, \eqref{equal_real} is not considered but for applications such as economic MG dispatch, \eqref{equal_real} is important for incorporating physical constraints such as power balance of the MG with the main grid (discussed in detail in Section \ref{post-process}). However, if \eqref{equal_real} is considered in the problem formulation, the $Ew(t)$ term in the RHS of~\eqref{dynamics_real} is dropped as the $Fw(t)$ term in the RHS of \eqref{equal_real} accommodates the uncertainty.\footnote{The $E$ matrix is still required for assigning a unique control input after accommodating the uncertainty in closed-loop for multi-input systems. See details in Section \ref{post-process}.} Additionally, note that constraint \eqref{equal_real} is application specific and is independent of the method and theoretical results presented in this paper. The chance constraints on the states are formulated as,
\begin{equation} \label{chance_real}
\begin{aligned}
&\mathbb{P}[Gx(t) \leq g]\geq \bar{1}-\bar{\alpha},  \qquad \forall t,
\end{aligned}
\end{equation}
 where $G \in \mathbb{R}^{r \times n}$, $g \in \mathbb{R}^{r}$ and $\bar{1} = \mymathbb{1}^{r}$ for individual chance constraints. \textcolor{black}{ $\bar{\alpha}=[\alpha_1, \dots, \alpha_r]^\top$} is the vector of the pointwise-in-time maximum probability of constraint violation, where $\alpha_i \in (0,0.5) \; \forall i \in  \mathbb{N}_{1}^{r}$.
 In the individual chance constraint form,~\eqref{chance_real} can be expressed as, $\mathbb{P}[G_ix \leq g_i]\geq 1-\alpha_i,  \quad \forall i \in  \mathbb{N}_{1}^{r}$. 
 In the joint chance constraint (JCC) form, a single violation probability denoted by ${\alpha}\in (0,0.5)$ can be defined for simultaneous satisfaction of all state constraints as, 
\begin{align*}\textcolor{black}{
    \mathbb{P}[
G_1x \leq g_1\land G_2x \leq g_2\land \dots\land G_rx \leq g_r
]\geq 1-{\alpha}.}
\end{align*}
\noindent 
Note that in this work, we re-interpret the maximum probability of violation of state constraints pointwise-in-time given by the chance constraints~\eqref{chance_real} as the maximum time-average of state constraint violations in closed-loop similar to~\cite{IEEE_1,munoz2018stochastic,time_avg,korda2014stochastic,capone2024online}. $Gx(t) \leq g$ is referred to as the original constraint with respect to which violations are measured.
 
\vspace{-0.5em}
\subsection{Online Adaptive SMPC (OA-SMPC)} \label{oasmpc}

Over the MPC prediction horizon $N$, computed from time $t$, we define the ordered collection of states, control inputs, uncertainties and coupling vectors as,
\begin{equation}
\begin{aligned}
\textbf{\text x}(t+1) &:= \Bigl(x(t\!+\!1|t),x(t\!+\!2|t),\ldots,x(t\!+\!N|t)\Bigr)^\top \in \mathbb{R}^{Nn}, \nonumber \\
\textbf{\text u}(t) &:= \Bigl(u(t|t),u(t\!+\!1|t),\ldots,u(t\!+\!N\!-\!1|t)\Bigr)^\top \in \mathbb{R}^{Nm}, \nonumber \\
\textbf{\text w}(t) &:= \Bigl(w(t|t),w(t\!+\!1|t),\ldots,w(t\!+\!N\!-\!1|t)\Bigr)^\top \in \mathbb{R}^{Np}, \nonumber \\
\textcolor{black}{\textbf{\text c}(t)} &:= \textcolor{black}{\Bigl(c(t|t),c(t\!+\!1|t),\ldots,c(t\!+\!N\!-\!1|t)\Bigr)}^\top \in \mathbb{R}^{Nd}. \nonumber 
\end{aligned}
\end{equation}

\noindent The system dynamics can be written in expanded form as
\begin{equation} \label{dynamics_real_2}
\begin{aligned}
x(t\!+\!k|t) = &A^kx(t|t)+\sum_{i=0}^{k-1}A^{k-1-i}Bu(t\!+\!i|t)+ \\
&\sum_{i=0}^{k-1}A^{k-1-i}Ew(t\!+\!i|t), \qquad \forall k \in \mathbb{N}_{1}^{N},\forall t,
\end{aligned}
\end{equation}
where $x(t|t)=x(t)$. Writing~\eqref{dynamics_real_2} in compact form yields,
\begin{equation} \label{dynamics_compact}
\begin{aligned}
\textbf{\text x}(t+1) = &\textbf{\text A}x(t)+\textbf{\text B}\textbf{\text u}(t) + \textbf{\text E}\textbf{\text w}(t), \qquad \forall t,
\end{aligned}
\end{equation}
where $\textbf{\text A} \in \mathbb{R}^{Nn \times n}$, $\textbf{\text B} \in \mathbb{R}^{Nn \times Nm}$ and $\textbf{\text E} \in \mathbb{R}^{Nn \times Np}$.


\noindent The hard control input constraints over the MPC prediction horizon are formulated as,
\begin{equation} \label{input hard_mpc}
\begin{aligned}
Su(t\!+\!k|t) \leq s,  \qquad \forall k \in \mathbb{N}_{0}^{N-1},\forall t.
\end{aligned}
\end{equation}

\noindent The equality constraints coupling the control inputs over the MPC prediction horizon are formulated as,
\begin{equation} \label{equality_mpc}
\begin{aligned}
Mu(t\!+\!k|t) =\textcolor{black}{c}(t\!+\!k|t)+Fw(t\!+\!k|t),  \qquad \forall k \in \mathbb{N}_{0}^{N-1},\forall t.
\end{aligned}
\end{equation}

\noindent Generally, 
the chance constraints in~\eqref{chance_real} are interpreted for the MPC prediction horizon pointwise-in-time by~\eqref{chance_mpc_previous_1}, which is over-conservative 
in closed-loop (see Remark~\ref{Approx_Chance_Real}). The corresponding relaxed deterministic reformulation of~\eqref{chance_mpc_previous_1} as implemented in the MPC prediction horizon by some previous works~\cite{IEEE_1,munoz2018stochastic} is given by~\eqref{chance_mpc_previous_2}. The aim of the deterministic reformulation is to tighten the state constraints under nominal MPC computations (resulting from ignoring uncertainties, i.e., $\textbf{\text w}(t) = \textbf{0}$) by an adaptive tightening parameter $\tilde h \in \mathbb{R}^{r}$ given by,
\begin{subequations} \label{chance_mpc}
\begin{equation} \label{chance_mpc_previous_1}
\begin{aligned}
&\mathbb{P}[Gx(t\!+\!k|t) \leq g]\geq \bar{1}-\bar{\alpha},  \qquad \forall k\in \mathbb{N}_{1}^{N},\forall t,
\end{aligned}
\end{equation}
\begin{equation} \label{chance_mpc_previous_2}
\begin{aligned}
&Gx(t\!+\!k|t) \leq g-\tilde h(t\!+\!k|t),  \qquad \forall k\in \mathbb{N}_{1}^{N},\forall t, \textbf{\text w}(t) = \textbf{0},
\end{aligned}
\end{equation}
\end{subequations}
where $\tilde h_i >0$, $\forall i \in  \mathbb{N}_{1}^{r}$ and is updated based on the time-average of past state constraint violations in closed-loop. Note that violations of state constraints (i.e., $Gx(t)> g$) can occur in closed-loop, as the uncertainties come into effect. The adaptive constraint tightening in~\eqref{chance_mpc_previous_2} attempts to reduce the conservatism inherent to~\eqref{chance_mpc_previous_1} by incorporating past state constraint violation behavior of the system in closed-loop, but can still be over-conservative (see Remark~\ref{Approx_Chance_Real}).

\begin{remark}[\textbf{Over-conservativeness of previous approaches}] \label{Approx_Chance_Real}
\eqref{chance_mpc_previous_1} approximates~\eqref{chance_real} conservatively 
\cite{munoz2018stochastic},~\cite[Sec. II-A]{korda2014stochastic}, as~\eqref{chance_mpc_previous_1} requires the constraint satisfaction conditionally on $x(t)$ (i.e., for $x(t)$ that can be reached at time $t$ by the given control policy under the uncertainty sequence). Equation~\eqref{chance_real}, however, requires constraint satisfaction in a 
more relaxed 
average sense (i.e., over all realizations of the uncertainty sequence up to $t$). \textcolor{black}{Moreover,~\eqref{chance_mpc_previous_1} does not consider the memory of past state constraint violations which is critical in the present time-average re-interpretation of chance constraints.}
Incorporating past constraint violations by using the adaptive tightening in~\eqref{chance_mpc_previous_2} can still be conservative in satisfying~\eqref{chance_real} in closed-loop 
due to the over-estimation of the tightening parameter $\tilde h$~\cite{munoz2018stochastic}. Additionally, in~\eqref{chance_mpc_previous_2}, the nominal MPC solutions never violate the state constraints over the prediction horizon,
as a result of restricting the size of the feasible state set (despite a larger feasible state set being available to the controller as compared to the nominal MPC when accommodating for uncertainty), which the MPC optimizer can theoretically exploit to further reduce conservativeness in closed-loop.
\end{remark}

Based on {Remark}~\ref{Approx_Chance_Real}, 
which shows that both~\eqref{chance_mpc_previous_1} and~\eqref{chance_mpc_previous_2} can be over-conservative in satisfying~\eqref{chance_real} in closed-loop, we propose to adaptively relax the state constraints in the nominal MPC instead of tightening them. The adaptive relaxation allows for state constraint violations over the nominal MPC prediction horizon ($Gx(t\!+\!k|t)>g$ with $\textbf{\text w}(t) = \textbf{0}$), to \textit{push} the system towards reduced conservativeness. We relax the satisfaction of~\eqref{chance_mpc_previous_1} and approximate~\eqref{chance_real} by reformulating the nominal state constraints as, 
\begin{equation} \label{chance_mpc_now}
\begin{aligned}
Gx(t\!+\!k|t) \leq g- h(t),  \qquad \forall k\in \mathbb{N}_1^{N},\forall t, \textbf{\text w}(t) = \textbf{0},
\end{aligned}
\end{equation}
where $\ h \in \mathbb{R}^{r}$ is the adaptive relaxing parameter with $h_i<0$, $\forall i \in  \mathbb{N}_{1}^{r}$. It should be noted that the sign of $h_i$ in~\eqref{chance_mpc_now} is opposite to $\tilde h_i$ in~\eqref{chance_mpc_previous_2}. We also observe that decreasing $h(t)$ in~\eqref{chance_mpc_now} expands the feasible state set, pushing the system more towards state constraint violations (i.e., $Gx(t\!+\!k|t)> g$), while increasing $h(t)$ contracts the feasible state set pulling the system away from state constraint violations.\footnote{Note that in economic MG dispatch with BESS in VRE grids, where the objective function is the actual economic cost of system operation like the electricity bill, and not necessarily only a penalty on the control input (BESS dispatch), relaxing the state constraints in the nominal MPC does not automatically lead the system to predicted nominal solutions that violate the (original) state constraints pathologically over the nominal MPC prediction horizon. The state (BESS SOC), in these applications tries to exploit the full feasible state set to best reduce economic cost for the MPC prediction horizon. 
Nevertheless, the case where the proposed formulation can result in pathological constraint violations is averted in closed-loop by a post-processing step described later in Section~\ref{post-process} and Remark~\ref{practical}.}
The initial value of $h$ at $t=0$ can be calculated based on domain knowledge~\cite{ghosh2023adaptive}, which obviates the requirement of past uncertainty samples, 
as in~\cite{IEEE_1}. The initial value of $h$ is not important since $h$ gets adapted as the system evolves with time~\cite{IEEE_2}. The ordered collection of adaptive relaxation parameters along the MPC prediction horizon is denoted as,
\begin{equation}
\begin{aligned}
\textbf{\text h}(t) := \Bigl(h(t),h(t),\ldots,h(t)\Bigr)^\top \in \mathbb{R}^{Nr}. \nonumber \\
\end{aligned}
\end{equation}

The nominal OA-SMPC, which is assumed to be a convex optimization problem is then formulated as, 
\begin{subequations}\label{mpc_general}
\begin{equation} \label{objective_mpc}
\begin{aligned}
\textbf{\text u}^*(t) = \mathop{\arg \min}\limits_{\textbf{\text u}(t) \in \mathbb{R}^{Nm}}\: {J}(x(t),\textbf{\text u}(t),\textbf{\text w}(t) = \textbf{0}),
\end{aligned}
\end{equation}
\vspace{-1.5em}
subject to
\begin{equation} \label{dynamics_compact_nominal}
\begin{aligned}
\textbf{\text x}(t+1) = &\textbf{ \text A}x(t)+\textbf{\text B}\textbf{\text u}(t), 
\end{aligned}
\end{equation}
\vspace{-1.5em}
\begin{equation} \label{input_compact_nominal}
\begin{aligned}
\textbf{\text Su}(t) \leq \textbf{\text s},
\end{aligned}
\end{equation}
\vspace{-1.5em}
\begin{equation} \label{equality_compact_nominal}
\begin{aligned}
\textbf{\text Mu}(t) = \textcolor{black}{\textbf{\text c}(t)},
\end{aligned}
\end{equation}
\vspace{-1.5em}
\begin{equation} \label{chance_constraints_compact_nominal}
\begin{aligned}
\textbf{\text Gx}(t+1) \leq \textbf{\text g}-\textbf{\text h}(t).
\end{aligned}
\end{equation}
\end{subequations}

\noindent where $J: \mathbb{R}^{n} \times \mathbb{R}^{Nm} \times \mathbb{R}^{Np} \rightarrow{\mathbb{R}}$ is an arbitrary convex function, $\textbf{\text S} \in \mathbb{R}^{Nq \times Nm}$, $\textbf{\text s} \in \mathbb{R}^{Nq}$, $\textbf{\text M} \in \mathbb{R}^{Nd \times Nm}$, $\textcolor{black}{\textbf{\text c}} \in \mathbb{R}^{Nd}$, $\textbf{\text G} \in \mathbb{R}^{Nr \times Nn}$, $\textbf{\text g} \in \mathbb{R}^{Nr}$. Note that dropping \eqref{equality_compact_nominal} makes the problem setup similar to \cite{IEEE_1}. Note that in~\eqref{chance_constraints_compact_nominal}, the MPC state constraints are applied for $x(t+k|t), \; \forall k \in \mathbb{N}_{1}^{N}$, and not for the present state corresponding to $k\!=\!0$ to allow for the present state to be outside of the feasible state set of the nominal OA-SMPC. \textcolor{black}{Assumption~\ref{assumption_control_inputs}, discussed next, ensures recursive feasibility and existence of an ideal control policy (described later in Assumption~\ref{assumption_ideal_control_policy}).} 

\begin{assumption}[\textbf{Control inputs 
\cite{IEEE_1}}]\label{assumption_control_inputs}
(a) \textcolor{black}{The control input constraints~\eqref{input_compact_nominal} are such that the system can provide enough control input to bring the predicted state at the next time-step, from any present state $x(t)$, to the feasible region of the nominal OA-SMPC~\eqref{chance_constraints_compact_nominal}
. Specifically, 
\begin{align*}
    \exists \; u(t|t) \text{ s.t } Su(t|t) &\leq s, \; Mu(t|t) = c(t|t), \\
    G(x(t+1|t)) &\leq g-h(t),
\end{align*} where $x(t+1|t) = Ax(t)+Bu(t|t)$, for all $t$. }(b) The system is one step controllable.
\end{assumption}

\textcolor{black}{The condition for testing Assumption \ref{assumption_control_inputs}(a) which ensures recursive feasibility (similar to~\cite{IEEE_1,schildbach2012randomized}) of the nominal OA-SMPC is given in the \ref{appendix_main}ppendix, and is excluded here for brevity. Assumption \ref{assumption_control_inputs}(a) ensures that after handling the uncertainty from the previous time step in closed-loop, resulting in the present state $x(t)$, which may be outside the feasible state set of the nominal OA-SMPC~\eqref{chance_constraints_compact_nominal}, the computed control input is strong enough to bring the predicted system state at the next time-step back to the feasible state set. }

\textcolor{black}{Assumption \ref{assumption_control_inputs}(a), while theoretically can be, is generally not restrictive for practical applications such as microgrids or HVAC systems, as these systems are generally designed to be able to have enough control input power to be able to handle uncertainties~\cite{IEEE_1}. Assumption \ref{assumption_control_inputs}(b) is more restrictive and is only used for ensuring sufficient conditions for the existence of an ideal control policy at every time step (see Assumption~\ref{assumption_ideal_control_policy}). Assumption \ref{assumption_control_inputs}(b) can be relaxed for practical applications such as the one described in the case study in Section \ref{case_study}.}


\begin{remark}[\textbf{Structure of the input matrix}]\label{Recursive} 
Note that Assumption~\ref{assumption_control_inputs}(b) is sufficient for saying that the system has at least as many control inputs as states (i.e., $n\leq m$) and $B$ has full row rank. The assumption 
implies that if $n=m$, $B$ has an inverse, while if $n<m$, $B$ has a right inverse.
\end{remark}

\begin{assumption}[\textbf{Form of the $h$ update rule}]\label{assumption_h_update}
The online h (adaptive relaxing parameter) update rule can be written as $h_i(t):=h_i(t-1)[1+K_i(t)]$, where $K_i(t)\!>\!-1, \; \forall t$ ensures $h_i(t)\!<\!0, \; \forall t$~\cite[Eq. (12)]{ghosh2023adaptive}. 
\end{assumption}

\begin{remark}[\textbf{Behavior of the $h$ update rule}]\label{k_t variation}
In {Assumption}~\ref{assumption_h_update}, $K_i(t)>0$ decreases $h_i(t)$, expanding the state constraints, pushing the system more towards state constraint violations, while $K_i(t)<0$ increases $h_i(t)$, contracting the state constraints, pulling the system away from state constraint violations.
\end{remark}

\subsection{Observed Violations} \label{violation_observed}

In this section, for consistency with earlier works like \cite{IEEE_1}, we drop~\eqref{equal_real} and~\eqref{equality_compact_nominal}. Thus, the nominal OA-SMPC computed optimal control inputs for the first time-step of the prediction horizon are implemented in closed-loop. The observed states get corrected once the uncertainties are realized, by using~\eqref{dynamics_real}. The case where~\eqref{equal_real} and~\eqref{equality_compact_nominal} are considered in the problem formulation is discussed in Section~\ref{post-process} which additionally post-processes the nominal OA-SMPC computed control inputs to correct for the uncertainty.

Without loss of generality, consider the $i^{\text{th}}$ state constraint in~\eqref{chance_real}. Let $V_i(t+1)\in \{0,1\}$ track whether the state constraint is violated in closed-loop at time $t+1$, while $Y_i(t+1) \in [0,1]$ keeps track of the time-average of violations up to time $t+1$. 
Note that the control input applied at time $t$ (along with the uncertainty realized at $t$) is manifested with updated system states, which can be observed only at $t+1$, i.e., there is 1 time-step delay in observing violations (or non-violations) from the time when the control inputs and uncertainties are applied.
\begin{subequations}\label{violations}
\begin{equation} \label{V_general}
\begin{aligned}
\!\!\!\!V_i(t+1)\!:= \!\left\{ 
\begin{matrix}
1,  &\!\!\!G_i(Ax(t)+Bu^*(t|t)+ Ew(t))>g_i,\\
0,  &\!\!\!G_i(Ax(t)+Bu^*(t|t)+ Ew(t))\leq g_i.
\end{matrix}
\right.
\end{aligned}
\end{equation} 
\begin{equation} \label{Y_general}
\begin{aligned}
Y_i(t+1) := \frac{\sum_{j=1}^{t+1}V_{i}(j)}{t+1}.
\end{aligned}
\end{equation}
\end{subequations}

The framework for tracking the state constraint violations and time-average of violations in the case of JCC, is the same as that of the individual chance constraints described in~\eqref{violations}. The only difference in the case of JCC is that a violation occurs if any one of the constraints (involved in the JCC) violates its specific state constraint bounds in closed-loop. 
\vspace{-0.5em}

\subsection{Convergence Properties of \texorpdfstring{$Y(t)$}{Yt}}\label{optimal_h}
In this section, like~\ref{violation_observed}, without loss of generality, we limit our discussion to the $i^{\text {th}}$ state constraint in~\eqref{chance_real} with $ i \in  \mathbb{N}_{1}^{r}$, with its corresponding adaptive relaxation parameter $h_i \in \mathbb{R}_{0-}$. The conditions established for ensuring the 
convergence of the time-average of state constraint violations to the maximum allowable violation probability in this paper uses a similar simplifying assumption as in~\cite{IEEE_1}. 
The 
simplifying assumption (see Assumption~\ref{assumption_ideal_control_policy}) allows the controller to apply ideal control inputs at the current time-step leading to a desired probability of violation of state constraints at the next time-step, under unknown bounded uncertainties.

Denote by $Z_i(t)=|\alpha_i - Y_i(t)|$ the absolute difference between the maximum allowable violation probability and time-average of violations of the $i^{\text {th}}$ state constraint observed at time $t$. We have to ensure that $Y_i$ tends to $\alpha_i$ in closed-loop as the system evolves with time for non-conservative chance constraint satisfaction. The non-conservative strategy ideally leads to lower costs without violating the state constraints beyond the maximum allowable violation probability. 

\textcolor{black}{
As $Z_i(t)$ is non-negative, the convergence of $Z_i$ can be guaranteed a.s., if $Z_i(t)$ is a supermartingale~\cite{williams1991probability}. 
Following Section~\ref{definitions}, the three conditions for $Z_i$ being a supermartingale are investigated below:}
\begin{enumerate}[label=(\alph*)]
    \item \textcolor{black}{Let $\mathcal{F}_{t}= \sigma{(\{w(0),w(1),\ldots,w(t-1)\})}$ be a $\sigma$-algebra on uncertainties realized up to time $t-1$. $Z_i(t)$ depends on $Y_i(t)$, which depends on the realization of all the uncertainties up to time $t\!-\!1$. Since $Z_i(t)$ is exactly known with information available up to time $t\!-\!1$, $Z_i(t)$ is $\mathcal{F}_{t}$ measurable $\forall t> 0$, and is thus adapted. Also, since $Z_i(t+1)$ is random with information available in $\mathcal{F}_{t}$, the process $Z_i$ is stochastic.
    \item As $V_i(t) \in \{0,1\}$, $Y_i(t) \in [0,1]$, and $\alpha_i \in (0,0.5)$, thus $Z_i(t)=|\alpha_i - Y_i(t)| \in [0,1).$ Thus, $\mathbb{E}[|Z(t)|] < 1< \infty$, \; $\forall t > 0$.
    \item It remains to show $\mathbb{E}[Z_i(t+1)|\mathcal{F}_{t}] \leq Z_i(t)$ a.s., $\forall t>0$, which we will show to hold under similar simplifying assumptions as in~\cite{IEEE_1}. The assumption involves replacing the stochastic $Z_i(t+1)|\mathcal{F}_t$ by its ideal surrogate $Z_i^*(t+1)|\mathcal{F}_t$ as explained next.}
\end{enumerate}

\noindent \textcolor{black}{From~\eqref{Y_general} $Y_i(t+1)$ can be written as,
\begin{equation} \label{Y(t+1)}
\begin{aligned}
Y_i (t+1) = \sum_{j=1}^{t+1}\frac{V_i(j)}{t+1} = t\frac{Y_i(t)}{t+1} +\frac{V_i(t+1)}{t+1}.
\end{aligned}
\end{equation}
Denoting $\mathbb{E}[Z_i(t\!+\!1)|\mathcal{F}_{t}]-Z_i(t)$ by $\Delta_i(t)$ and substituting~\eqref{Y(t+1)} in~$\Delta_i(t)$ results in,
\begin{align}
\Delta_i(t) = &\mathbb{E}\Biggl[\left|\alpha_i - t\frac{Y_i(t)}{t+1} -\frac{V_i(t+1)}{t+1}\right|| \mathcal{F}_t\Biggr] \nonumber \\
-& |\alpha_i - Y_i(t)|.\label{Delta}
\end{align}
Let $p_i(t+1) := \mathbb{P}(V_i(t\!+\!1) = 1|\mathcal{F}_{t})$, which means that ${p_i(t+1)}$ is the probability of observing a constraint violation at time $t\!+\!1$. Similarly, $1-p_i(t+1) = \mathbb{P}(V_i(t\!+\!1) = 0|\mathcal{F}_{t})$ is the probability of not observing a violation at time $t\!+\!1$.   
We notice that the only stochastic part within the expectation in the RHS of~\eqref{Delta} is $V_i(t+1)$ based on $\mathcal{F}_t$. Therefore,~\eqref{Delta} can be rewritten in terms of $p_i(t+1)$ to replace the expectation as,
\vspace{-1em}
\begin{equation} \label{Delta_2}
\begin{aligned}
\Delta_i(t) = &p_i(t+1)\Biggl[\left|\alpha_i - t\frac{Y_i(t)}{t+1} -\frac{1}{t+1}\right|\Biggr] \\
+(1-&p_i(t+1))\Biggl[\left|\alpha_i - t\frac{Y_i(t)}{t+1}\right|\Biggr] -
|\alpha_i - Y_i(t)|. 
\end{aligned}
\end{equation}
Simplifying yields,
\begin{align}
&\Delta_i(t) = p_i(t+1)\beta_i(t) + \left[{\left|\alpha_i - t\frac{Y_i(t)}{t+1}\right| -
|\alpha_i - Y_i(t)|}\right], \label{Delta_3}\\
&\beta_i(t)=\left|\alpha_i - t\frac{Y_i(t)}{t+1} -\frac{1}{t+1}\right|-\left|\alpha_i - t\frac{Y_i(t)}{t+1}\right|.\label{Delta_4}
\end{align}}

\textcolor{black}{To keep the analysis applicable to any arbitrary probability distribution of $w(t)$, we consider the sign of $\beta_i(t)$ to decide the ideal control policy~\cite{IEEE_1}, for which we introduce Assumption~\ref{assumption_ideal_control_policy}.} 

\begin{assumption}[\textbf{\textcolor{black}{Ideal control policy}}]\label{assumption_ideal_control_policy}
\textcolor{black}{(a) There exists an ideal control policy (or ideal control input) at time $t$, applying which makes $p_i(t+1)=p^*_i(t+1)$, where $p^*_i(t+1)\in\{0,1\}, \; \forall t$. (b) When $ \beta_i(t)< 0$, we apply ideal control inputs at $t$ leading to $p_i^*(t+1)=1$ whereas, if $\beta_i(t) > 0$, we apply ideal control inputs at $t$ leading to $p_i^*(t+1)=0$.}
\end{assumption}

\begin{definition}[\textbf{\textcolor{black}{Ideal surrogate of a variable}}]\label{ideal_surrogate}
\textcolor{black}{The manifestation of a variable under the ideal control policy in Assumption~\ref{assumption_ideal_control_policy} is defined as the ideal surrogate of that variable. It is denoted by an asterisk after the variable.}
\end{definition}

\textcolor{black}{Assumption~\ref{assumption_ideal_control_policy}(a) ensures that there exists 
a control input which causes a constraint violation a.s. at the next time step $t+1$ in closed-loop from any present state $x(t)$, i.e., $p^*_i(t+1)=1$. Similarly, $p^*_i(t+1)=0$ means that the controller can drive the system to prevent a constraint violation a.s. at $t+1$ in closed-loop.}
\textcolor{black}{$Z_i(t)$ and $\Delta_i(t)$ under the ideal control policy are referred to as $Z_i^*(t)$ and $\Delta_i^*(t)$ respectively,\footnote{\textcolor{black}{$Z_i^*(t)=|\alpha_i-Y_i^{*}(t)|$ and $\Delta_i^*(t)=Z_i^*(t+1)-Z_i(t)$.}} consistent with Definition~\ref{ideal_surrogate}. Assumption~\ref{assumption_ideal_control_policy}(b) ensures that when $ \beta_i(t)< 0$, we choose ideal control inputs leading to $p_i^*(t+1)=1$ so that $\Delta_i^*(t)$ may be $\leq 0$ a.s. Similarly, when $\beta_i(t) > 0$, Assumption~\ref{assumption_ideal_control_policy}(b) ensures that we choose ideal control inputs leading to $p_i^*(t+1)=0$ so that $\Delta_i^*(t)$ may be $\leq 0$ a.s. Note that despite the application of ideal control inputs, the second and third term in the RHS of ~\eqref{Delta_3} can lead to  $\Delta_i^*(t)>0$, which is used to derive the critical region $\kappa(\alpha_i,t)$, in Theorem~\ref{monotonic_conv} later. The critical region signifies a region where, if $Y_i(t) \in \kappa(\alpha_i,t)$, then $\Delta_i^*(t)>0$ a.s. }
 
\textcolor{black}{The case when $\beta_i(t)=0$ implies 
$p_i^*(t+1)$ not having an effect on $\Delta_i(t)$, as the first term in the RHS of~\eqref{Delta_3} vanishes regardless of the value of $p_i(t+1)$. From~\eqref{Delta_4}, it can be shown that $\beta_i(t)=0\Leftrightarrow \; Y_i(t)=\frac{\alpha_i -\frac{1}{2(t+1)}}{1-\frac{1}{(t+1)}}$, and further solving for $\Delta_i(t) >0$ in~\eqref{Delta_3}, yields $\alpha_i>\frac{1}{2(t+1)}$, which becomes more likely to be satisfied as $t$ increases.
However, the case $\beta_i(t) = 0$ can be avoided by a particular choice of $\alpha_i$. 
From~\eqref{Delta_4}, 
\begin{align*}
    \beta_i(t) \neq 0 \Leftrightarrow  Y_i(t) \neq \frac{\alpha_i -\frac{1}{2(t+1)}}{1-\frac{1}{(t+1)}}.
\end{align*}
Since, $Y_i(t)=\frac{\sum_{j=1}^{t}V_{i}(j)}{t}$, and $\sum_{j=1}^{t}V_{i}(j) \in \mathbb{N}$, therefore, 
\begin{align*}
    Y_i(t) \neq \frac{\alpha_i -\frac{1}{2(t+1)}}{1-\frac{1}{(t+1)}} \iff \sum_{j=1}^{t}V_{i}(j) \neq (t+1)\alpha_i-\frac{1}{2},
\end{align*}
which can be ensured by appropriate choice of $\alpha_i$. Specifically, $(t+1)\alpha_i-\frac{1}{2} \not \in \mathbb{N}, \; \forall t$. The online adaptive relaxation rule is introduced next in Section~\ref{h_section}, with its behavior under practical scenarios being discussed in Section~\ref{deviation}.}

\begin{subsubsection}{\textcolor{black}{\texorpdfstring{$h$}{h} update rule}}\label{h_section}
\textcolor{black}{$\beta_i(t) < 0$ implies $\alpha_i-Y_i(t)+\frac{2Y_i(t)-1}{2(t+1)} > 0$,\footnote{\textcolor{black}{$\beta_i(t)=|\zeta_i(t)-\frac{1}{t+1}|-|\zeta_i(t)|$, where $\zeta_i(t)=\alpha_i-t\frac{Y_i(t)}{t+1}$. Solving for $\beta_i(t)\!<\!0$ leads to $\zeta_i(t)>\!\frac{1}{2(t+1)}$, which implies $\alpha_i\!-\!Y_i(t)\!+\!\frac{2Y_i(t)-1}{2(t+1)}\! >\! 0$.}} and is associated with {violation} of state constraints at $t\!+\!1$ which is achieved by expanding the state constraint limits in~\eqref{chance_constraints_compact_nominal}. Similarly $\beta_i(t) > 0$ implies $\alpha_i-Y_i(t)+\frac{2Y_i(t)-1}{2(t+1)} < 0$ and is associated with {non-violation} of state constraints at $t\!+\!1$ which is achieved by contracting the limits in~\eqref{chance_constraints_compact_nominal}.
From {Assumption}~\ref{assumption_h_update}, the $h_i$ update rule can be framed with $K_i(t)\propto \bigl[\alpha_i-Y_i(t)+\frac{2Y_i(t)-1}{2(t+1)}\bigr]$ as 
\vspace{-0.5em}
\begin{equation}\label{h_update_2}
\begin{aligned}
h_i(t)=h_i(t-1)\biggl[1+\frac{\alpha_i-Y_i(t)+\frac{2Y_i(t)-1}{2(t+1)}}{\gamma_i}\biggr], \end{aligned}
\end{equation}
\textcolor{black}{where $\gamma_i \in \mathbb{R}_{0+}$ is a constant of proportionality that adjusts the rate of $h_i$ update ensuring $\frac{\alpha_i-Y_i(t)+\frac{2Y_i(t)-1}{2(t+1)}}{\gamma_i}>-1, \; \forall t$}.}
%
%
\textcolor{black}{Note that Theorem~\ref{monotonic_conv}, discussed next, implicitly assumes that~\eqref{h_update_2} is able to enforce Assumption~\ref{assumption_ideal_control_policy}. However, it may be possible to devise other $h$ update rules enforcing Assumption~\ref{assumption_ideal_control_policy}, wherein Assumption~\ref{assumption_h_update} (and consequently, use of~\eqref{h_update_2}) can be relaxed. In practical applications, while~\eqref{h_update_2} cannot guarantee satisfaction of Assumption~\ref{assumption_ideal_control_policy} in closed-loop, it still encourages it (see Section~\ref{deviation}). }

\end{subsubsection}

\begin{theorem}\label{monotonic_conv}
\textcolor{black}{Let Assumptions~\ref{assumption_system},~\ref{assumption_uncertainty},~\ref{assumption_control_inputs}, and~\ref{assumption_ideal_control_policy} hold. Given $\alpha_i\!>\!\frac{1}{2(t_0+1)}$,
$Z_i^*(t)$, which is the ideal surrogate of the desired supermartingale $Z_i(t)$, is monotonically decreasing a.s. $\forall t \geq t_0$, if and only if, ${Y_i(t) \not\in \kappa (\alpha_i ,t)}, \; \forall t \geq t_0$, where $\kappa (\alpha_i ,t)$ is a neighborhood of $\alpha_i$ 
defined as, 
\begin{equation} \label{Kappa}
\begin{aligned}
\kappa (\alpha_i ,t) := \Biggl(\frac{\alpha_i-\frac{1}{2(t+1)}}{1-\frac{1}{2(t+1)}},\frac{\alpha_i}{1-\frac{1}{2(t+1)}}\Biggr). \; 
\nonumber \\
\end{aligned}
\end{equation}} 
\end{theorem}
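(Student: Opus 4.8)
The plan is to collapse the statement to an elementary, deterministic one-step inequality. Under the idealised control policy the violation outcome at time $t+1$ is prescribed as a deterministic function of $\mathcal F_t$: since $t\,Y_i(t)$ violations have occurred up to $t$ and at most one more can occur, the surrogate running average $Y_i^*(t+1)$ takes one of exactly two values, $y_0(t):=\tfrac{t}{t+1}Y_i(t)$ (no violation) or $y_1(t):=y_0(t)+\tfrac1{t+1}$ (violation), obeying $y_0(t)<Y_i(t)<y_1(t)$ and $y_1(t)-y_0(t)=\tfrac1{t+1}$ by the definition of $Y_i$ in~\eqref{Y_general}. The ``ideal'' choice is the one that drives the average toward $\alpha_i$, i.e.\ a violation is declared iff $Y_i(t)<\alpha_i$, so that $Z_i^*(t+1)=|\alpha_i-Y_i^*(t+1)|$ is $\mathcal F_t$-measurable; hence $Z_i^*$ is, along each sample path, an honest real sequence adapted to $\{\mathcal F_t\}$ with $\mathbb E[Z_i^*(t+1)\mid\mathcal F_t]=Z_i^*(t+1)$, so that $Z_i^*$ being a supermartingale, equivalently monotonically decreasing in the sense of Definition~\ref{monotone} for all $t\ge t_0$, holds precisely on the event that $Z_i^*(t+1)\le Z_i^*(t)$ at every $t\ge t_0$. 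It therefore suffices to prove, for each fixed $t\ge t_0$, the deterministic equivalence $Z_i^*(t+1)\le Z_i^*(t)$ if and only if $Y_i(t)\notin\kappa(\alpha_i,t)$, and then conjoin over $t\ge t_0$.

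First I would partition the range $Y_i(t)\in[0,1]$ into the four regimes fixed by the sign of $Y_i(t)-\alpha_i$ (which fixes $V_i^*$) together with whether the one-step update crosses $\alpha_i$: (I) $Y_i(t)\ge\tfrac{\alpha_i(t+1)}{t}$; (II) $\alpha_i\le Y_i(t)<\tfrac{\alpha_i(t+1)}{t}$; (III) $\tfrac{\alpha_i(t+1)-1}{t}<Y_i(t)<\alpha_i$; (IV) $Y_i(t)\le\tfrac{\alpha_i(t+1)-1}{t}$. In regimes (I) and (IV) the update stays on the same side of $\alpha_i$ and, because $y_0(t)<Y_i(t)<y_1(t)$, $Z_i^*$ strictly decreases with no condition; one checks these sets lie wholly outside $\kappa(\alpha_i,t)$ (they sit, respectively, to the right of $\tfrac{\alpha_i(t+1)}{t}>\tfrac{2\alpha_i(t+1)}{2t+1}$ and to the left of $\tfrac{\alpha_i(t+1)-1}{t}<\tfrac{2\alpha_i(t+1)-1}{2t+1}$). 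In regime (II), $Z_i^*(t+1)=\alpha_i-y_0(t)$ and $Z_i^*(t)=Y_i(t)-\alpha_i$, and $Z_i^*(t+1)\le Z_i^*(t)$ rearranges to $Y_i(t)\ge\tfrac{2\alpha_i(t+1)}{2t+1}$; in regime (III), $Z_i^*(t+1)=y_1(t)-\alpha_i$ and $Z_i^*(t)=\alpha_i-Y_i(t)$, and the inequality rearranges to $Y_i(t)\le\tfrac{2\alpha_i(t+1)-1}{2t+1}$. Assembling the four regimes, the set of $Y_i(t)$ on which $Z_i^*$ fails to be non-increasing is exactly the open interval $\bigl(\tfrac{2\alpha_i(t+1)-1}{2t+1},\tfrac{2\alpha_i(t+1)}{2t+1}\bigr)$; multiplying numerator and denominator of each endpoint by $\tfrac1{2(t+1)}$ gives $\tfrac{2\alpha_i(t+1)-1}{2t+1}=\tfrac{\alpha_i-\frac1{2(t+1)}}{1-\frac1{2(t+1)}}$ and $\tfrac{2\alpha_i(t+1)}{2t+1}=\tfrac{\alpha_i}{1-\frac1{2(t+1)}}$, so this set is precisely $\kappa(\alpha_i,t)$. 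The hypothesis $\alpha_i>\tfrac1{2(t_0+1)}$ enters exactly here: together with $\alpha_i<\tfrac12$ it guarantees $0<\tfrac{2\alpha_i(t+1)-1}{2t+1}<\alpha_i<\tfrac{2\alpha_i(t+1)}{2t+1}<1$ for all $t\ge t_0$, so $\kappa(\alpha_i,t)$ is a genuine open neighbourhood of $\alpha_i$ inside $(0,1)$ and, in particular, the admissible initial value $Y_i(t_0)=0$ lies outside it.

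I expect the main difficulty to be bookkeeping rather than depth: keeping the four regimes (and their overshoot sub-cases) mutually exclusive and exhaustive over $[0,1]$, and confirming that the thresholds obtained independently in regimes (II) and (III) glue together at exactly the two endpoints of $\kappa(\alpha_i,t)$, with the automatic-descent regions (I) and (IV) neither overlapping $\kappa$ nor leaving a gap. A subsidiary point requiring care is the reduction in the first paragraph: one must justify that the idealisation renders $Z_i^*(t+1)$ $\mathcal F_t$-measurable, so that the supermartingale/monotonicity question genuinely reduces to the pathwise inequality, with the ``a.s.'' in the statement absorbing only the randomness of the history prior to $t_0$. Finally the boundary conventions deserve a check so that the ``if and only if'' is tight: since $\alpha_i<\tfrac12$ the tie at $Y_i(t)=\alpha_i$ is resolved in favour of declaring no violation, and one verifies $Z_i^*$ then strictly increases there, consistently with $\alpha_i\in\kappa(\alpha_i,t)$ and with $\kappa(\alpha_i,t)$ being open (at either endpoint of $\kappa$ one has $Z_i^*(t+1)=Z_i^*(t)$, which still counts as non-increasing).
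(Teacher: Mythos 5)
Your proof is correct and lands on exactly the same critical region, but it is organized differently from the paper's and rests on a slightly different definition of the ideal policy, which is worth flagging. The paper decomposes $\Delta_i(t)=\mathbb{E}[Z_i(t+1)\,|\,\mathcal{F}_t]-Z_i(t)$ as $p_i(t+1)\beta_i(t)$ plus a deterministic remainder (\eqref{Delta_3}--\eqref{Delta_4}), defines the ideal policy by the sign of $\beta_i(t)$ --- i.e., violate iff $Y_i(t)<\frac{\alpha_i-\frac{1}{2(t+1)}}{1-\frac{1}{t+1}}$, which is the greedy choice of whichever of your $y_0(t),y_1(t)$ is closer to $\alpha_i$ --- and then proves the equivalence by establishing $\neg\mathcal{P}\Rightarrow\neg\mathcal{Q}$ and $\neg\mathcal{Q}\Rightarrow\neg\mathcal{P}$ through three cases on the sign of $\beta_i(t)$. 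You instead case on whether the one-step update crosses $\alpha_i$ and take the ideal policy to be ``violate iff $Y_i(t)<\alpha_i$.'' These two policies are not the same: they disagree precisely on $\bigl[\frac{\alpha_i-\frac{1}{2(t+1)}}{1-\frac{1}{t+1}},\alpha_i\bigr)$. The discrepancy is harmless here because that sub-interval sits strictly inside $\kappa(\alpha_i,t)$, so the two surrogates coincide whenever $Y_i(t)\notin\kappa(\alpha_i,t)$ and both fail to decrease whenever $Y_i(t)\in\kappa(\alpha_i,t)$; your four-regime bookkeeping then reproduces the endpoints $\frac{2\alpha_i(t+1)-1}{2t+1}$ and $\frac{2\alpha_i(t+1)}{2t+1}$ exactly as in Cases~\ref{case_1}--\ref{case_3}. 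What your route buys is a cleaner reduction to a pathwise, deterministic inequality (no conditional expectation survives once $p_i^*\in\{0,1\}$) and an explicit check that the automatic-descent regimes tile the complement of $\kappa(\alpha_i,t)$ with no gap or overlap; what it loses is that the paper's $\beta_i$-based policy is the one that actually minimizes $Z_i^*(t+1)$, which is the natural reading of ``ideal surrogate'' and is the policy the $h$-update rule~\eqref{h_update_2} is built to emulate --- so if you keep your version you should state explicitly that your surrogate differs from the paper's inside $\kappa(\alpha_i,t)$ and argue, as above, that the equivalence is unaffected.
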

\begin{proof}\label{proof_1}
\vspace{-1em}
Theorem~\ref{monotonic_conv} says that given $\alpha_i>\frac{1}{2(t_0+1)}$, $\mathcal{P} \iff \mathcal{Q}$, where $\mathcal{P}$ is defined as $\Delta_i^*(t)\leq0$ a.s., $\forall t \geq t_0$, and $\mathcal{Q}$ is defined as ${Y_i(t) \not\in \kappa (\alpha_i ,t)}, \; \forall t \geq t_0$.
%
To prove, $\mathcal{P} \iff \mathcal{Q}$, we will first prove the inverse as true, i.e., $\neg \mathcal{P} \implies \neg \mathcal{Q}$, which implies $\mathcal{Q} \implies \mathcal{P}$. Then we will prove the contrapositive as true, i.e., $\neg \mathcal{Q} \implies \neg \mathcal{P}$, which implies $\mathcal{P} \implies \mathcal{Q}$, which will complete the proof.

To prove $\neg \mathcal{P} \implies \neg \mathcal{Q}$, assume $\neg \mathcal{P}$ is true, i.e., $\Delta_i^*(t) >0$ a.s. To show, $\neg \mathcal{Q}$ holds, we consider Cases~\ref{case_1},~\ref{case_2} and~\ref{case_3} based on the sign of $\beta_i(t)$ 
below.

\begin{case}{1}\label{case_1}
\begin{subequations}\label{contradiction_1}
\begin{equation}\label{contradiction_1a}
\begin{aligned}
&\beta_i(t) < 0 \Leftrightarrow
\;{p^*_i (t+1) = 1}
\;  \\ &\Leftrightarrow \; {\alpha_i - Y_i(t) + \frac{2Y_i(t)-1}{2(t+1)} > 0} 
\Leftrightarrow Y_i(t)<\frac{\alpha_i -\frac{1}{2(t+1)}}{1-\frac{1}{t+1}}. 
\end{aligned}
\end{equation} 
\emph{Solving for $\Delta_i^*(t) >0$ when ${p^*_i (t+1) = 1}$ yields}
\begin{equation}\label{contradiction_1b}
\begin{aligned}
Y_i(t)>\frac{\alpha_i -\frac{1}{2(t+1)}}{1-\frac{1}{2(t+1)}}.
\end{aligned}
\end{equation}
\end{subequations}
\noindent \emph{Equation~\eqref{contradiction_1} implies the critical region of Case~\ref{case_1} is $\kappa_{1}(\alpha_i ,t) = \Bigl(\frac{\alpha_i -\frac{1}{2(t+1)}}{1-\frac{1}{2(t+1)}},\frac{\alpha_i -\frac{1}{2(t+1)}}{1-\frac{1}{t+1}}\Bigr)$, implying if $Y_i(t) \in \kappa_{1}(\alpha_i ,t)$, despite applying ideal control inputs to the system at $t$ leading to ${p^*_i (t+1) = 1}$, $\Delta_i^*(t) >0$ a.s.}
\end{case}

\begin{case}{2}\label{case_2}
\begin{subequations}\label{contradiction_2}
\begin{equation}\label{contradiction_2a}
\begin{aligned}
& \beta_i(t) > 0 \Leftrightarrow \; {p^*_i (t+1) = 0}  \\ &\Leftrightarrow \;{\alpha_i - Y_i(t) + \frac{2Y_i(t)-1}{2(t+1)} < 0}
\Leftrightarrow Y_i(t)>\frac{\alpha_i -\frac{1}{2(t+1)}}{1-\frac{1}{t+1}}. 
\end{aligned}
\end{equation} 
\emph{Solving for $\Delta_i^*(t) >0$ when ${p^*_i (t+1) = 0}$ yields}
\begin{equation}\label{contradiction_2b} 
\begin{aligned}
Y_i(t)<\frac{\alpha_i}{1-\frac{1}{2(t+1)}}.
\end{aligned}
\end{equation}
\end{subequations}

\noindent \emph{Equation~\eqref{contradiction_2} implies the critical region of Case~\ref{case_2} is $\kappa_{2}(\alpha_i ,t) = 
 \left(\frac{\alpha_i -\frac{1}{2(t+1)}}{1-\frac{1}{t+1}},\frac{\alpha_i}{1-\frac{1}{2(t+1)}}\right)$, implying if $Y_i(t) \in \kappa_{2}(\alpha_i ,t)$, despite applying ideal control inputs to the system at $t$ leading to ${p^*_i (t+1) = 0}$, $\Delta_i^*(t) >0$  a.s.}
\end{case}

\begin{case}{3}\label{case_3}
\emph{$\beta_i(t)=0$ leads to $\Delta_i^*(t) > 0$ a.s. because $\alpha_i>\frac{1}{2(t_0+1)} \geq \frac{1}{2(t+1)}, \; \forall t\geq t_0$. Additionally, $\beta_i(t)=0 \Leftrightarrow Y_i(t) = \frac{\alpha_i -\frac{1}{2(t+1)}}{1-\frac{1}{(t+1)}}=\kappa_{3}(\alpha_i ,t)$. }
\end{case}
\noindent The total critical region from Cases~\ref{case_1},~\ref{case_2} and~\ref{case_3} yields, 
\begin{align*}
    \kappa (\alpha_i ,t) =& \kappa_{1} (\alpha_i ,t)\; \cup \; \kappa_{2} (\alpha_i ,t) \; \cup \; \kappa_{3} (\alpha_i ,t)\ \\
    =& \Biggl(\frac{\alpha_i-\frac{1}{2(t+1)}}{1-\frac{1}{2(t+1)}},\frac{\alpha_i}{1-\frac{1}{2(t+1)}}\Biggr),
\end{align*}

\noindent which shows that if 
$\Delta_i^*(t) >0$  a.s., 
then $Y_i(t) \in \kappa (\alpha_i ,t)$, which proves  $\neg \mathcal{P} \implies \neg \mathcal{Q}$.

To prove $\neg \mathcal{Q} \implies \neg \mathcal{P}$, assume, $\neg \mathcal{Q}$, i.e, $Y_i(t) \in \kappa (\alpha_i ,t)$. $\neg \mathcal{Q}$ is subdivided into Cases~\ref{case_4},~\ref{case_5} and~\ref{case_6}.

\begin{case}{4}\label{case_4} \emph{$\beta_i(t)<0 \Leftrightarrow {p^*_i (t+1) = 1}$, which yields~\eqref{contradiction_1a}. 
\eqref{contradiction_1a} and $Y_i(t) \in \kappa (\alpha_i ,t)$ yields, $Y_i(t) \in \Bigl(\frac{\alpha_i -\frac{1}{2(t+1)}}{1-\frac{1}{2(t+1)}},\frac{\alpha_i -\frac{1}{2(t+1)}}{1-\frac{1}{t+1}}\Bigr) = \kappa_{1}(\alpha_i ,t)$. 
\noindent However, from Case~\ref{case_1}, we know that when $p^*_i (t+1) = 1$ and $Y_i(t) \in \kappa_{1}(\alpha_i ,t)$, then $\Delta_i^*(t) >0$  a.s.}

\end{case}

\begin{case}{5}\label{case_5}
\emph{
$\beta_i(t) >0 \Leftrightarrow p^*_i (t+1) = 0$,  which yields~\eqref{contradiction_2a}. 
\eqref{contradiction_2a} and $Y_i(t) \in \kappa (\alpha_i ,t)$ yields, $Y_i(t) \in \left(\frac{\alpha_i -\frac{1}{2(t+1)}}{1-\frac{1}{t+1}},\frac{\alpha_i}{1-\frac{1}{2(t+1)}}\right) = \kappa_{2}(\alpha_i ,t)$. \noindent However, from Case~\ref{case_2}, we know that when $p^*_i (t+1) = 0$ and $Y_i(t) \in \kappa_{2}(\alpha_i ,t)$, then $\Delta_i^*(t) >0$  a.s.}

\end{case}

\begin{case}{6}\label{case_6} \emph{$\beta_i(t) = 0 \Leftrightarrow Y_i(t) = \kappa_3(\alpha_i,t)$ which leads to $\Delta_i^{*}(t) >0$ a.s. because $\alpha_i>\frac{1}{2(t+1)}$.}

\end{case}
Cases~\ref{case_4},~\ref{case_5} and~\ref{case_6} together prove  $\neg \mathcal{Q} \implies \neg \mathcal{P}$, which completes the proof.
\end{proof}

\begin{remark}[\textbf{\textcolor{black}{Extension} of Theorem~\ref{monotonic_conv}}]\label{implication}

Theorem~\ref{monotonic_conv} establishes the conditions necessary and sufficient for $\Delta_i^*(t)\leq0$ a.s., $\forall t \geq t_0$, which implies that $Z_i^*(t)$ is monotonically decreasing a.s., $\forall t \geq t_0$. \textcolor{black}{From Cases \ref{case_1}, \ref{case_4}, and Cases \ref{case_2}, \ref{case_5}, if the critical region is defined by $\kappa' (\alpha_i ,t) := \Biggl[\frac{\alpha_i-\frac{1}{2(t+1)}}{1-\frac{1}{2(t+1)}},\frac{\alpha_i}{1-\frac{1}{2(t+1)}}\Biggr]$, then $\Delta_i^*(t)<0$ a.s., $\forall t \geq t_0$, which implies that $Z_i^*(t)$ is strictly decreasing a.s., $\forall t \geq t_0$. As $Z_i^*(t)$ is bounded from below by $0$, 
we conclude $\lim_{t\to\infty}  Z_i^*(t)=\inf \{Z_i^*(t)\}=0$} a.s. from the monotone convergence theorem, which implies 
the \textcolor{black}{asymptotic} convergence of $Y_i^{*}$ to $\alpha_i$ a.s., if $\alpha_i>\frac{1}{2(t_0+1)}$, $p_i(t+1)=p_i^*(t+1)$, and \textcolor{black}{${Y_i(t) \not\in \kappa' (\alpha_i ,t)}, \;\forall t \geq t_0$.\footnote{\textcolor{black}{Note that although $Y_i(t_0)$ need not necessarily be $Y_i^*(t_0)$, we abuse the notation to use $Z_i(t_0)$ and $Z_i^*(t_0)$ interchangeably in Theorem~\ref{monotonic_conv} and Remark~\ref{implication} to aid in a more intuitive explanation.} }
}
\end{remark}

\begin{remark}[\textbf{Width of the critical region~\cite{IEEE_1}}]\label{width}

The width of the critical region \textcolor{black}{${\kappa' (\alpha_i ,t)}$} is $\frac{1}{2t+1}$, while $Y_i(t)\in\{0, \frac{1}{t}, \frac{2}{t}, \ldots, 1\}$. The granularity in possible values of $Y_i(t)$ is $1/t$, $\forall t$. As $\frac{1}{2t+1}<\frac{1}{t}$, there is at most one critical value of $Y_i(t)$, such that despite adapting the system to take ideal control inputs leading to $p_i(t+1)=p_i^*(t+1)$, it is possible that $\Delta_i^*(t)\geq0$ which can lead to the deviation of $Y_i^{*}$ from $\alpha_i$ momentarily. However, the width of the critical region monotonically decreases as $t$ increases, therefore, the assumption of ${Y_i(t) \not\in \textcolor{black}{\kappa' (\alpha_i ,t)}}$, is weak and can be easily satisfied as $t$ increases.  The width of the critical region in the present work is also smaller than~\cite{IEEE_1}, implying more relaxed conditions for guaranteeing convergence of $Y_i^{*}$ to $\alpha_i$.
\end{remark}

\begin{lemma}\label{lemma_1}
\textcolor{black}{Let Assumptions~\ref{assumption_system},~\ref{assumption_uncertainty},~\ref{assumption_control_inputs}, and~\ref{assumption_ideal_control_policy} hold.} Given any initial time $t_0$, with $\alpha_i>\frac{1}{2(t_0+1)}$ and ${Y_i(t_0) \not\in \textcolor{black}{\kappa' (\alpha_i ,t_0)}}$, 
there exists some time $t'=\inf\{t > t_0 \:|\:{Y_i^{*}(t) \in \textcolor{black}{\kappa' (\alpha_i ,t)}\}}$ a.s.

\begin{proof}\label{proof_cor_1}
{The lemma states that if the initial 
time-average of violations of state constraints is outside of the critical region, 
and we apply ideal control inputs from thereon, then at some future time $t'$, the time-average of violations comes inside the critical region a.s. Lemma~\ref{lemma_1} implies the a.s. \textcolor{black}{strict} decrease of $Z_i^*(t)$ is violated for $t\geq t'$ (see Remark~\ref{implication}). 
Lemma~\ref{lemma_1} can be proved by showing that when $Z_i^*(t)$ decreases, then $\frac{\delta{\hat{w_i}(t)}}{\delta{t}} > \frac{\delta{Z^*_i(t)}}{\delta{t}}$, where $\hat w(t)$ is the width of the critical region at time $t$, and $\delta$ is the forward finite difference operator.

$\hat w_i(t):=\frac{1}{2t+1}$, implying $\delta{\hat{w_i}}(t):=\hat w_i(t+1) - \hat w_i(t) = \frac{-2}{(2t+1)(2t+3)}$ where $\delta{t}:=1$. Thus, $\hat w_i(t)$ decreases in the order of $\mathcal{O}(\frac{1}{t^2})$. 

Similarly, $\delta{Z^*_i(t)}:=Z^*_i(t+1)-Z_i(t)=\Delta_i^*(t)$. $\Delta_i^*(t)$ can be subdivided into Cases~\ref{case_7} and~\ref{case_8}. 

\begin{case}{1}\label{case_7}
\emph{When $p^*_i(t+1)=1$, $\Delta_i^*(t)=\biggl[\left|\alpha_i\! -\! t\frac{Y_i(t)}{t+1} -\frac{1}{t+1}\right|\biggr] -|\alpha_i - Y_i(t)|. $ }   
\end{case}

\begin{case}{2}\label{case_8}
\emph{When $p^*_i(t+1)=0$, $\Delta_i^*(t)=\left|\alpha_i - t\frac{Y_i(t)}{t+1}\right| -
|\alpha_i - Y_i(t)|.$ }   
\end{case}

Substituting $Y_i(t)=\frac{\sum_{j=1}^{t}V_{i}(j)}{t}$, where $\sum_{j=1}^{t}V_{i}(j)\leq t$,
in both Cases~\ref{case_7} and~\ref{case_8}, we see that when $Z_i^*(t)$ decreases a.s. 
by virtue of ${Y_i(t) \not\in \textcolor{black}{\kappa' (\alpha_i ,t)}}$, then $Z_i^*(t)$ decreases most modestly (i.e., with the least magnitude of decrease), in the order of $\mathcal{O}(\frac{1}{t})$, which proves that, $\frac{\delta{\hat{w_i}(t)}}{\delta{t}} >\frac{\delta{Z^*_i(t)}}{\delta{t}}$, which completes the proof.
}
\end{proof}
\end{lemma}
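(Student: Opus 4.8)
The plan is to argue by contradiction, showing that the distance $Z_i^*(t)=|\alpha_i-Y_i(t)|$ cannot keep shrinking fast enough to stay ahead of the (much faster shrinking) critical region $\kappa(\alpha_i,t)$ indefinitely. Fix the $i$-th constraint and recall from Remark~\ref{width} that $\hat w_i(t):=\tfrac{1}{2t+1}$ is the width of $\kappa(\alpha_i,t)$ and that $Y_i(t)$ takes values on the grid $\{0,\tfrac1t,\dots,1\}$. Suppose, towards a contradiction, that no such $t'$ exists; then together with the hypothesis $Y_i(t_0)\notin\kappa(\alpha_i,t_0)$ we have $Y_i(t)\notin\kappa(\alpha_i,t)$ for every $t\ge t_0$. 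Since ideal inputs are applied, $p_i(t+1)=p_i^*(t+1)$, and $\alpha_i>\tfrac{1}{2(t_0+1)}$, Theorem~\ref{monotonic_conv} gives $\Delta_i^*(t)\le 0$ a.s.\ for all $t\ge t_0$, i.e.\ $Z_i^*$ is monotonically decreasing a.s.; being bounded below by $0$, it converges a.s.\ by the monotone convergence theorem (Definition~\ref{monotone}).

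Next I would quantify the two competing per-step rates. For the critical region, $\delta\hat w_i(t):=\hat w_i(t+1)-\hat w_i(t)=-\tfrac{2}{(2t+1)(2t+3)}$, so its width, and analogously the distance from $\alpha_i$ to either endpoint of $\kappa(\alpha_i,t)$, recedes at rate $\mathcal O(1/t^2)$. For $Z_i^*$, I would expand $\delta Z_i^*(t)=Z_i^*(t+1)-Z_i^*(t)=\Delta_i^*(t)$ through the two ideal-policy cases $p_i^*(t+1)=1$ (Case~\ref{case_7}) and $p_i^*(t+1)=0$ (Case~\ref{case_8}), substituting $Y_i(t)=\tfrac{1}{t}\sum_{j=1}^{t}V_i(j)$ with $\sum_{j=1}^{t}V_i(j)\le t$. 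The aim of this computation is to show that whenever $Y_i(t)\notin\kappa(\alpha_i,t)$, the decrease of $Z_i^*$ is, in the worst (smallest-magnitude) case, of order $1/t$, hence $\tfrac{\delta\hat w_i(t)}{\delta t}>\tfrac{\delta Z_i^*(t)}{\delta t}$.

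The contradiction is then immediate: if $|\Delta_i^*(t)|\ge c/t$ for some constant $c>0$ and all $t\ge t_0$, then $Z_i^*(t_0+n)\le Z_i^*(t_0)-c\sum_{k=t_0}^{t_0+n-1}\tfrac{1}{k}\to-\infty$ as $n\to\infty$, which is impossible since $Z_i^*\ge 0$. Hence the set $\{t>t_0\;|\;Y_i(t)\in\kappa(\alpha_i,t)\}$ is nonempty a.s., so $t'=\inf\{t>t_0\;|\;Y_i(t)\in\kappa(\alpha_i,t)\}$ is finite a.s., which is the claim; moreover, by Theorem~\ref{monotonic_conv}, $\Delta_i^*(t')>0$, i.e.\ the a.s.\ monotone decrease of $Z_i^*$ fails at $t'$.

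The hard part is the uniform lower bound on the ``most modest'' decrease, that is, establishing $|\Delta_i^*(t)|=\Omega(1/t)$ (and not merely $\Omega(1/t^2)$) while $Y_i(t)\notin\kappa(\alpha_i,t)$. The delicate regime is when $Y_i(t)$ lies just outside the shrinking critical region, where a naive estimate of $\Delta_i^*(t)$ degrades toward $\mathcal O(1/t^2)$ and no longer dominates $\delta\hat w_i(t)$; resolving it requires exploiting the $1/t$ granularity of $Y_i(t)$ (Remark~\ref{width}), which keeps $Y_i(t)$ an $\mathcal O(1/t)$ distance away from the boundary of $\kappa(\alpha_i,t)$, together with the standing requirement $(t+1)\alpha_i-\tfrac12\notin\mathbb N$ that excludes the single degenerate value $\kappa_3(\alpha_i,t)=\tfrac{\alpha_i-1/(2(t+1))}{1-1/(t+1)}$. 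A secondary subtlety is that $Y_i$ may oscillate from above $\kappa(\alpha_i,t)$ to below it before entering; but $Z_i^*$ is monotone regardless of which side $Y_i$ is on and the case analysis is symmetric in the two sides, so the divergent-series argument applies unchanged.
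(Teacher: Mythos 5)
Your proposal follows essentially the same route as the paper's proof: the same width computation $\delta\hat w_i(t)=-\tfrac{2}{(2t+1)(2t+3)}=\mathcal O(1/t^2)$, the same identification $\delta Z_i^*(t)=\Delta_i^*(t)$ split into the two ideal-policy cases (your Cases coincide with the paper's Cases~\ref{case_7} and~\ref{case_8}), and the same key claim that outside $\kappa(\alpha_i,t)$ the decrease of $Z_i^*$ is of order $1/t$. Where you differ is in how the rate comparison is converted into the existence of $t'$: the paper stops at the inequality $\tfrac{\delta\hat w_i(t)}{\delta t}>\tfrac{\delta Z_i^*(t)}{\delta t}$ and declares the proof complete, whereas you close with an explicit contradiction --- if $|\Delta_i^*(t)|\ge c/t$ for all $t\ge t_0$ then $Z_i^*(t_0+n)\le Z_i^*(t_0)-c\sum_{k=t_0}^{t_0+n-1}k^{-1}\to-\infty$, contradicting $Z_i^*\ge 0$. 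This is a genuinely tighter ending: it does not actually need the $\mathcal O(1/t^2)$ shrinkage of $\kappa$ at all, only the divergence of the harmonic series, and it repairs the non sequitur in inferring ``entry into a shrinking interval'' from a bare comparison of two negative rates. The caveat you flag is real and worth stating plainly: the uniform bound $|\Delta_i^*(t)|=\Omega(1/t)$ degrades when $Y_i(t)$ sits just outside the boundary of $\kappa(\alpha_i,t)$ (in Case~\ref{case_7}, for $Y_i(t)$ between $\alpha_i-\tfrac{1-\alpha_i}{t}$ and the lower endpoint of $\kappa$, the magnitude of $\Delta_i^*(t)$ is proportional to the distance from $Y_i(t)$ to that endpoint and can fall below order $1/t$); the $1/t$ granularity of $Y_i(t)$ bounds how many grid points lie inside $\kappa$ but does not by itself keep a grid point an $\Omega(1/t)$ distance from the boundary. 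This gap is inherited from, not introduced relative to, the paper's own proof, which asserts the ``most modest'' decrease is $\mathcal O(1/t)$ without treating that regime.
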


\begin{lemma}\label{lemma_2}
\textcolor{black}{Let Assumptions~\ref{assumption_system},~\ref{assumption_uncertainty},~\ref{assumption_control_inputs}, and~\ref{assumption_ideal_control_policy} hold.} Given any initial time $t_0$, with $\alpha_i>\frac{1}{2(t_0+1)}$ and ${Y_i(t_0) \in \textcolor{black}{\kappa' (\alpha_i ,t_0)}}$, 
there exists some time $t'=\inf\{t > t_0 \:|\:{Y_i^{*}(t) \not\in \textcolor{black}{\kappa' (\alpha_i ,t)}}\}$ a.s.

\begin{proof}\label{proof_cor_2}
{The lemma states that if the initial 
time-average of violation of state constraints is inside the critical region, 
and we apply ideal control inputs from thereon, then at some future time $t'$, the time-average of violation goes outside of the critical region a.s. Lemma~\ref{lemma_2} prevents the monotonic increase of $Z_i^*(t)$ for $t\geq t'$. \textcolor{black}{When ${Y_i(t) \in {\kappa' (\alpha_i ,t)}}$, $\frac{\delta{Z^*_i(t)}}{\delta{t}} \geq 0$, while $\frac{\delta{\hat{w_i}(t)}}{\delta{t}} <0$, thus leading to $\frac{\delta{Z^*_i(t)}}{\delta{t}} > \frac{\delta{\hat{w_i}(t)}}{\delta{t}}$,
which completes the proof.}}
\end{proof}
\end{lemma}

\begin{theorem}\label{just_conv}
\textcolor{black}{Let Assumptions~\ref{assumption_system},~\ref{assumption_uncertainty},~\ref{assumption_control_inputs}, and~\ref{assumption_ideal_control_policy} hold.} Given any initial time $t_0$ with $\alpha_i>\frac{1}{2(t_0+1)}$, and time-average of violations $Y_i(t_0)$, 
$Y_i^{*}$ asymptotically converges to $\alpha_i$ a.s.
\end{theorem}

\begin{proof}
{The theorem states that if ideal control inputs are applied to the system starting from any arbitrary time $t_0$ with $\alpha_i>\frac{1}{2(t_0+1)}$, then the time-average of violations converges to the maximum probability of violations of state constraints a.s. The proof follows from Theorem~\ref{monotonic_conv}, Remarks~\ref{implication} and~\ref{width}, Lemmas~\ref{lemma_1},~\ref{lemma_2}, and relaxes the assumption of ${Y_i(t) \not\in \textcolor{black}{\kappa' (\alpha_i ,t)}},\; \forall t\geq t_0$, of Remark~\ref{implication}. Theorem~\ref{just_conv} also rigorously proves the result which was intuitively argued in~\cite{IEEE_1}. 

From Remark~\ref{width}, the width of the critical region $\hat w_i(t)$ is $\frac{1}{2t+1}$, which monotonically decreases with time. As the critical region $\textcolor{black}{\kappa' (\alpha_i ,t)}$ is a neighborhood of $\alpha_i$ by construction, decrease of $\hat w_i(t)$ implies decrease of the width of the neighborhood. $\lim_{t\to\infty} \hat w_i(t)=0$, which implies $\textcolor{black}{\lim_{t\to\infty} \kappa' (\alpha_i ,t)=\alpha_i}$, i.e., the critical region becomes a point. With $\alpha_i>\frac{1}{2(t_0+1)}$ and $p_i(t+1)=p_i^*(t+1), \; \forall t\geq t_0$, two cases are possible: 
\begin{case}{1}\label{case_9}
\emph{
When ${Y_i(t_0) \not\in \textcolor{black}{\kappa' (\alpha_i ,t)}}$, Lemma~\ref{lemma_1} concludes that there exists some $t'=\inf\{t > t_0 \:|\:{Y_i^{*}(t) \in \textcolor{black}{\kappa' (\alpha_i ,t)}\}}$ a.s, which implies that $Y_i^{*}(t')$ is closer to $\alpha_i$ as compared to $Y_i(t_0)$ a.s. (see Remark~\ref{implication}). Then, from Lemma~\ref{lemma_2}, we can conclude that there exists some $t''=\inf\{t > t' \:|\:{Y_i^{*}(t) \not\in \textcolor{black}{\kappa' (\alpha_i ,t)}}\}$ a.s., which implies that $Y_i^{*}(t'')$ is no closer to $\alpha_i$ as compared to $Y_i^{*}(t)'$ a.s. (see Remark~\ref{implication}). 
The process repeats, with $Y_i^{*}$ coming in and out of the critical region a.s. as time progresses. However, as the width of the critical region 
vanishes at $t\to\infty$, making the critical region the point $\alpha_i$, we conclude $Y_i^{*}$ converges to $\alpha_i$ a.s. as $t\to\infty$.
}
\end{case}
\begin{case}{2}\label{case_10}
\emph{
When ${Y_i(t_0) \in \textcolor{black}{\kappa' (\alpha_i ,t)}}$, a similar argument can be made as in Case~\ref{case_9} involving Lemma~\ref{lemma_1} and~\ref{lemma_2}, with $Y_i^{*}$ coming in and out of the critical region a.s. until ultimately converging to $\alpha_i$ a.s. as $t \to \infty$.   } 
\end{case}
Both Cases~\ref{case_9} and~\ref{case_10} complete the proof. 
}
\end{proof}

\begin{subsubsection}{\textcolor{black}{Deviation of the practical control policy from the ideal control policy}}\label{deviation}
\textcolor{black}{
Remark \ref{width} establishes that the assumption of ${Y_i(t) \not\in \textcolor{black}{\kappa' (\alpha_i ,t)}}$, is weak and can be easily satisfied as $t$ increases. Thus, given ${Y_i(t) \not\in \textcolor{black}{\kappa' (\alpha_i ,t)}}$, if at some time $t$ $Y_i(t)<\frac{\alpha_i-\frac{1}{2(t+1)}}{1-\frac{1}{2(t+1)}}$, then $Y_i(t)<\frac{\alpha_i-\frac{1}{2(t+1)}}{1-\frac{1}{(t+1)}}<\alpha_i$ holds, which results in $ h_i(t)<h_i(t-1)$ on applying~\eqref{h_update_2}. The decrease of $h_i$ expands the feasible state set for $x(t+1)$ in~\eqref{mpc_general} and thereby encourages constraint violations at $t+1$ in closed-loop. While under the ideal control policy (Assumption~\ref{assumption_ideal_control_policy}), $h_i(t)<h_i(t-1)$ would have guaranteed constraint violation at time $t+1$ a.s., in practical scenarios, just the expansion of the feasible state set alone cannot guarantee violation. A similar argument can be made when $Y_i(t)>\frac{\alpha_i}{1-\frac{1}{2(t+1)}}$, which by virtue of $Y_i(t)>\alpha_i>\frac{\alpha_i-\frac{1}{2(t+1)}}{1-\frac{1}{(t+1)}}$ and~\eqref{h_update_2} leads to $h_i(t)>h_i(t-1)$, contracting the feasible state set for $x(t+1)$ thereby discouraging constraint violations at $t+1$ in closed-loop. Thus under deviation of the practical control policy from the ideal, while asymptotic convergence of $Y_i$ to $\alpha_i$ cannot be guaranteed, it still is encouraged by~\eqref{h_update_2}. }
\end{subsubsection}
\vspace{-1em}
\subsection{Asymptotic Behavior of the Practical System}\label{asmp}

It is important to determine the asymptotic behavior (i.e., as $t\to \infty$) of the practical system in which the simplifying assumption of applying an ideal control policy referred to in Assumption~\ref{assumption_ideal_control_policy} is dropped.

\begin{theorem}\label{asmp_conv}
\textcolor{black}{Let Assumptions~\ref{assumption_system},~\ref{assumption_uncertainty}, 
and~\ref{assumption_control_inputs}(a) hold.} The expected value of $Y_i(t+1)|\mathcal{F}_t$ asymptotically converges to $Y_i(t)$.
\end{theorem}

\begin{proof}\label{proof_2}
{Applying the limit of $t\to \infty$ to~\eqref{Delta_3}, and rearranging to bypass the indeterminate $\frac{\infty}{\infty}$ form, we get,
\vspace{-5pt}
\begin{equation}\label{asymptotic violations} 
\begin{aligned}
\lim_{t\to\infty} \Delta_i(t)  = &\lim_{t\to\infty} \ p_i(t+1)\Biggl[\left|\alpha_i - \frac{Y_i(t)}{1+\frac{1}{t}} -\frac{1}{t+1}\right|-\\
&\left|\alpha_i - \frac{Y_i(t)}{1+\frac{1}{t}}\right|\Biggr] + \left|\alpha_i - \frac{Y_i(t)}{1+\frac{1}{t}}\right| -
|\alpha_i - Y_i(t)|.
\end{aligned}
\end{equation} 
As $p_i(t\!+\!1) \in [0,1]$, substituting $t\to\!\infty$ in~\eqref{asymptotic violations}, yields $\lim_{t\to\infty} \Delta_i(t)\!=\!0$, which implies either: 

\begin{case}{1}\label{case_i}
\emph{$\lim_{t\to\infty} Y_i(t)=\lim_{t\to\infty}\mathbb{E}[Y_i(t+1)|\mathcal{F}_t]$.}
\end{case}

\begin{case}{2}\label{case_ii}
\emph{$\lim_{t\to\infty} Y_i(t)=\alpha_i+l$ and $\lim_{t\to\infty} \mathbb{E}[Y_i(t+1)|\mathcal{F}_t]=\alpha_i-l$, where, $l\in[-\alpha_i,\alpha_i]$.} 

\noindent \emph{Taking expectation on both sides of~\eqref{Y(t+1)} yields,}
\begin{equation} \label{Y(t+1)_proof}
\begin{aligned}
\mathbb{E}[Y_i (t+1)|\mathcal{F}_t] = t\frac{Y_i(t)}{t+1} +\mathbb{E}\biggl[\frac{V_i(t+1)|\mathcal{F}_t}{t+1}\biggr].
\end{aligned}
\end{equation}

\noindent \emph{Taking the limit at ${t\to\infty}$ in~\eqref{Y(t+1)_proof}, and substituting the values of $\lim_{t\to\infty} Y_i(t)$ and $\lim_{t\to\infty} \mathbb{E}[Y_i(t+1)|\mathcal{F}_t]$ yields,}
\begin{equation} \label{_proof_final}
\begin{aligned}
\alpha_i-l =\lim_{t\to\infty} \frac{\alpha_i+l}{1+\frac{1}{t}}+\lim_{t\to\infty}\mathbb{E}\biggl[\frac{V_i(t+1)|\mathcal{F}_t}{t+1}\biggr].
\end{aligned}
\end{equation}

\noindent \emph{As $\mathbb{E}[V_i(t+1)|\mathcal{F}_t]=p_i(t+1) \in [0,1]$, hence evaluating the limit in~\eqref{_proof_final} yields $l=0$. Hence Case~\ref{case_ii} implies Case~\ref{case_i} (but not the other way around).}
\end{case}

Case~\ref{case_i} completes the proof.} 
\end{proof}

\textcolor{black}{
The above proof shows that the time-average of the state constraint violations has a martingale-like behavior asymptotically which may be useful for practical operation of the proposed OA-SMPC to avoid unpredictable violation behavior in the long run as the system evolves.} 


\subsection{Post-processing for Real-time System Operation}\label{post-process}

In previous works such as~\cite{IEEE_1}, after the nominal MPC computed optimal control inputs for the first time-step of the prediction horizon are implemented, the observed states get corrected in closed-loop to account for real-time uncertainties by~\eqref{dynamics_real}. Accommodation of the entire uncertainty (uncertainty in weather forecast) by the state is reasonable for building climate control applications where the state (room temperature) and control input (heating/cooling effect from the air-conditioner) are not coupled to the same physical equipment~\cite{IEEE_1}.
However, in certain applications where the states and control inputs are coupled to the same equipment (like BESS, where state of charge (SOC) is the state, and charging/discharging power is a control input), the first time-step optimal control inputs (computed by the nominal MPC) can also be post-processed in real-time to correct for the realized uncertainty. For example, the BESS can alter its nominal MPC computed optimal control inputs and states to correct for the VRES and gross load forecast uncertainty in real-time to maintain power balance of the MG with the main grid.


\begin{assumption}[\textbf{Post-processing to correct for uncertainties in real-time}]\label{assumption_post_process}
{For each chance constrained state affected by uncertainties: (a) There are two mutually coupled sources of control with one source having the primary responsibility of handling the uncertainties in closed-loop. During correction of the observed states in closed-loop to account for the uncertainty, the feasible altered control input and state set is the same as defined by {\normalfont ${\textbf{\text s}_{1:q}}$}~\eqref{input_compact_nominal} and {\normalfont $\textbf{\text g}_{1:r}-\textbf{\text h}_{1:r}(t)$}~\eqref{chance_constraints_compact_nominal} respectively, which are the time-varying design limitations. 
The secondary control source always has enough control input available to handle the remaining part of the uncertainty (through satisfaction of~\eqref{equal_real}) that cannot be handled by the primary source due to the possibility of violation of the time-varying design limitations by the primary control source in closed-loop, (b) The state transition is not dependent on the secondary control source, (c) The primary and secondary control sources are coupled only to each other, (d) The state transition is not dependent on the control sources associated with other states.}
\end{assumption}
${\textbf{\text s}_{1:q}}$ is a hard constraint and does not vary with time, but as $\textbf{\text h}_{1:r}(t)$ is time-varying, we refer to these design limitations as time-varying, when considered together. After updating the states and control inputs, $h(t+1)$ is updated by~\eqref{h_update_2}, and the optimization in~\eqref{mpc_general} is repeated. The schematic diagram of the complete OA-SMPC operational framework with post-processing is shown in Fig.~\ref{MPC_framework}, with the algorithm presented in {Algorithm}~\ref{alg:alg1}. 
The results from Theorems~\ref{monotonic_conv},~\ref{just_conv} and~\ref{asmp_conv} are independent of the post-processing framework, and still hold with post-processing. Additionally, the restrictive one step controllability in Assumption~\ref{assumption_control_inputs}(b) can be relaxed for practical systems while still maintaining the structure of the input matrix in Remark~\ref{Recursive} due to Assumption~\ref{assumption_post_process}. 

\textcolor{black}{Note that the computational cost of the proposed OA-SMPC is same as that of a nominal MPC, with the $h$ update happening outside of the MPC framework which makes the present method extremely scalable for practical implementation. Also, similar to~\cite[Sec. VII]{IEEE_1}, as the linear structure of the system is not leveraged for Theorems~\ref{monotonic_conv},~\ref{just_conv} and~\ref{asmp_conv}, the results hold for nonlinear systems too provided 
the relevant assumptions hold. }
%

\vspace{-2em}
\begin{figure}[ht]
\centering
\includegraphics[scale=0.167]{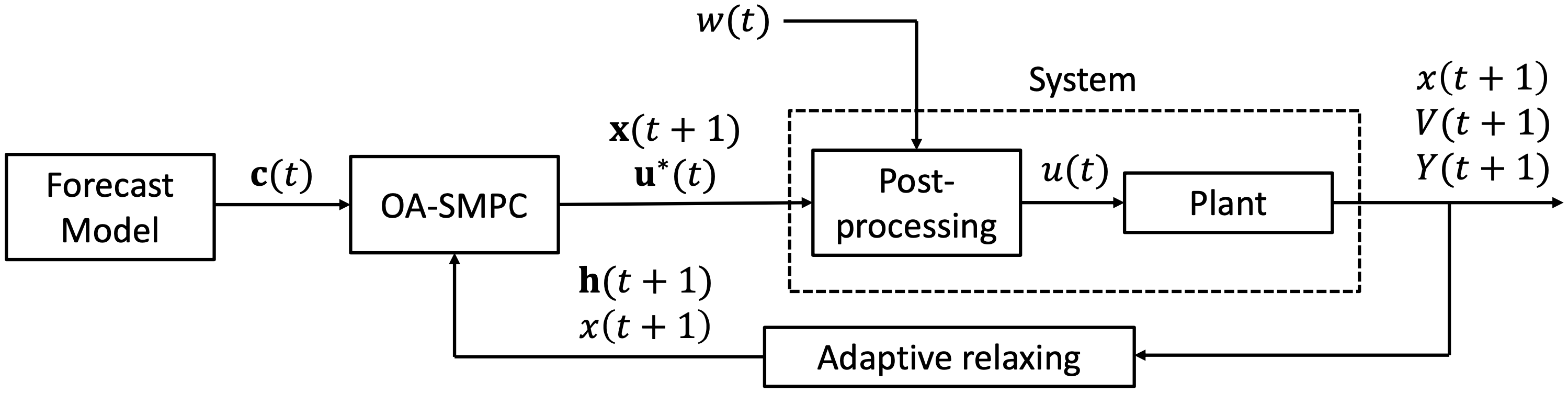}
\caption{OA-SMPC operational framework with post-processing.}
\vspace{-1em}
\label{MPC_framework}
\end{figure}


\begin{example}
    
\emph{To demonstrate post-processing, consider a simple system with $x(t)=[x_1(t)]$ as the state subjected to chance constraints, control inputs $\setlength\arraycolsep{3pt} u(t)\!=\!{\begin{bmatrix} u_1(t) &  u_2(t)\end{bmatrix}}^{\top}$, where $u_1$ and $u_2$ are the primary and secondary control sources respectively, and the uncertainty is denoted by $w(t)$. The system matrices are $A$, $\setlength\arraycolsep{3pt} B\!=\!{\begin{bmatrix} B_{11} &  0\end{bmatrix}}$ and $E$. The structure of $B$ follows from {Assumption}~\ref{assumption_post_process}(b). $u_1(t)$ and $u_2(t)$ are coupled by~\eqref{equal_real}, following {Assumption}~\ref{assumption_post_process}(c). Let $\tilde{x}_1(t+1)$ and $\tilde{u}_1$  be the state and primary control upper limits based on the time-varying design limitations following {Assumption}~\ref{assumption_post_process}(a). The system described is similar to a MG with BESS, VRE, local load demand and grid connectivity, where the BESS SOC is the state, BESS charging/discharging power is the primary control source, and the grid import power is the secondary control source. Both the control sources are coupled by the grid power balance equation, while the uncertainty is the real-time VRES and load forecast error.}

\emph{When correcting for the uncertainties in closed-loop, first it is ensured that the primary control source handles only the part of the uncertainty that still keeps it within the feasible set. The altered primary control input can be formulated as,} 
\vspace{-2pt}
\begin{subequations}\label{u_altered}
\begin{equation} \label{u_altered_1}
\begin{aligned}
{u}_1(t):= \min \Bigl(u_1^*(t|t)+D_1w(t),\tilde{u}_1\Bigr),
\end{aligned}
\end{equation}
\emph{where $D = B^{\dagger}E$, and $u_1^*(t|t)$ is the optimal MPC computed primary control input. The state update equation using {Assumptions~\ref{assumption_post_process}(a) and~\ref{assumption_post_process}(b)} is formulated as} 
\begin{equation} \label{x_altered}
\begin{aligned}
x_1(t+1) =\min \Bigl(Ax_1(t)+B_{11}{u}_1(t), \tilde{x}_1(t+1) \Bigr).
\end{aligned}
\end{equation}
\end{subequations}
\emph{If from~\eqref{x_altered}, $x_1(t+1)=\tilde{x}_1(t+1)$, we re-compute, ${u}_1(t)= (B_{11})^{-1} \left(\tilde{x}_1(t+1)-Ax_1(t)\right)$. Finally, the  altered secondary control input $u_2(t)$ is computed by satisfying~\eqref{equal_real}. In the case of the time-varying design limitations giving the lower limits of the state and primary controls, $\vec {x}_1(t+1)$ and $\vec {u}_1$ respectively,~\eqref{u_altered_1} and~\eqref{x_altered} are modified by replacing the $\min$ by the $\max$ function. Assumptions~\ref{assumption_post_process}(a) and~\ref{assumption_post_process}(b) ensure that the post-processing steps give unique solutions, and Assumptions~\ref{assumption_post_process}(c) and {\ref{assumption_post_process}(d)} ensure that the correction for uncertainties affecting a chance constrained state and related control inputs does not unnecessarily affect other states and control inputs which may lead to inconsistency 
in the post-processed solutions. 
After post-processing, the state constraint violations in closed-loop are tracked as~\eqref{V_post_processed} with time-average of violations calculated similar to~\eqref{Y_general}.}
\begin{equation} \label{V_post_processed}
\begin{aligned}
\!\!\!\!V_1(t+1)\!:= \!\left\{ 
\begin{matrix}
1,  &\!\!\!G_1x(t+1)>g_1,\\
0,  &\!\!\!G_1x(t+1)\leq g_1.
\end{matrix}
\right.
\end{aligned}
\end{equation} 
\end{example}

\begin{remark}[\textbf{Significance of the post-processing framework}]\label{uncertainty}
{
In related previous works with two mutually coupled sources of control to handle uncertainties on chance constrained states such as~\cite{Guo2018}, the time-varying design limitations as that of \textit{Assumption~\ref{assumption_post_process}}(a) are not considered. Large primary control inputs to handle large uncertainties can, thus, potentially lead to damaging the primary controller in~\cite{Guo2018}. In other works like~\cite{IEEE_1}, MPC computed control inputs are not altered to account for uncertainties, making the state handle the entire uncertainty in closed-loop. Large uncertainties can thus steer the system away from the OA-SMPC feasible region, which due to \textit{Assumption~\ref{assumption_control_inputs}}(a) can lead to the application of 
expensive control input at the next time-step to bring the system back to feasibility. Such an expensive control input can lead to high economic cost, a problem not considered in both~\cite{IEEE_1} and~\cite{Guo2018}.} 
    
\end{remark}

\begin{remark}[\textbf{\textcolor{black}{Non-conservative chance constraint satisfaction in a `practical sense' in closed-loop even under time-varying uncertainty distribution, and repeated large uncertainties}}]\label{practical}
{
As the nominal OA-SMPC always has access to relaxed feasible states, it is possible for the predicted nominal solutions to always violate the (original) state constraints $\forall k,t$ (as satisfaction of~\eqref{chance_mpc_previous_1} is not mandatory in our formulation). In closed-loop, thus, if solutions are continuously violated more than the maximum prescribed level, $h_i$ increases until $h_i \rightarrow 0^-$ according to~\eqref{h_update_2}. Mathematical violations can still persist in $V_i(t+1)$ after $h_i \rightarrow 0^-$ which fails to give the benefit of adaptive relaxation. These violations can be ignored during practical implementation, by adding a small $\varepsilon >0$ to $g_i$ resulting in considering violations only if $G_ix(t+1)>g_i+\varepsilon$. The consideration of $\varepsilon$ in tracking violations is an operational step and can be removed by the user when $h_i$ decreases sufficiently to relax the state constraint and reduce conservatism.
\textcolor{black}{This adaptive behavior of our system along with post-processing ensures that the chance constraints in~\eqref{chance_real} are satisfied in a `practical sense' in closed-loop, which previous works~\cite{IEEE_1, Guo2018,capone2024online} may not be able to satisfy
under repeated large uncertainties
when the mixed worst-case disturbance sequence is not known a-priori, or if the uncertainty distribution changes with time.} 
Note that the mixed worst-case disturbance sequence computed via scenarios of uncertainty in~\cite{IEEE_1,Guo2018} may in itself be over-conservative~\cite{lorenzen2016constraint}.}

\end{remark}

\begin{algorithm}[]
\caption{Online Adaptive SMPC (OA-SMPC)}\label{alg:alg1}
\begin{algorithmic}
\STATE 
\STATE {{Initialization}}
\begin{enumerate} \item[1.] Choose $x(0)$.
\item[2.] Choose $h(0)$ from domain knowledge.
\end {enumerate} 
\STATE {{Online solution}}
\begin{enumerate} \item[1.] Solve~\eqref{mpc_general} to get $u^*(t|t)$.
\item[2a.] If post-processing of nominal OA-SMPC computed optimal control inputs are not allowed: Set \hbox{$u(t)\!\gets\!u^*(t|t)$} and account for the uncertainties in $x(t\!+\!1)$ by~\eqref{dynamics_real}. Calculate $V(t+1)$ from~\eqref{V_general}.
\item[2b.] If post-processing of nominal OA-SMPC computed optimal control inputs are allowed: Post-process by~\eqref{u_altered} and~\eqref{equal_real} to account for the uncertainties to calculate $x(t+1)$ and $u(t)$. Calculate $V(t+1)$ similar to~\eqref{V_post_processed}. 
\item[3.] Calculate $Y(t+1)$ from~\eqref{Y_general}.
\item[4.] Calculate $h(t+1)$ from~\eqref{h_update_2}.
\item[5.] Set $t \gets t+1$ and repeat from Step 1 of Online solution.
\end {enumerate}
\end{algorithmic}
\label{alg1}
\end{algorithm}

\vspace{-1em}
\section{Case study}\label{case_study}
\subsection{Overview} \label{overview}
In this section, we implement our proposed method (OA-SMPC) for simulating the optimal BESS dispatch strategy for a practical MG with PV, load and connection to the main grid \textcolor{black}{in an EMPC framework.} The MG setup is from the real-life MG at the Port of San Diego, described in~\cite{ghosh2023adaptive}. The MG model incorporates electricity prices with demand and energy charges, realistic load and PV forecast, and a post-processing step for incorporating the forecast uncertainties in BESS dispatch. The yearly (2019) electricity costs were compared for the MG, for the traditional MPC, with hard constraints on the state (BESS SOC), and our OA-SMPC method with chance constraints on the state. The motivation for using chance constraints on BESS SOC in the OA-SMPC is to leverage some extra BESS capacity to reduce demand peaks and thus, demand charges, leading to significant electricity cost savings, while staying within a maximum violation probability bound to avoid adverse effects on BESS life. \textcolor{black}{Additionally, we also compare our proposed OA-SMPC to the traditional EMPC method without chance constraints, and a state-of-the-art approach~\cite{IEEE_1} from the literature with chance constraints and similar computational cost.}
\subsection{PV and Load Forecast} \label{forecast}

The MG model uses day-ahead PV and gross load forecasts with 15 minute time resolution as inputs. The k-Nearest Neighbor (kNN) algorithm is used for the gross load forecast. The training data comprises of $15$-min resolution historical load observations from November $1$, $2018$ to November $20$, $2019$. 
For every MPC horizon, gross load observations for the previous $24$~h (feature vector) are compared with the training sample and $k\!=\!29$ nearest neighbors are identified by the kNN algorithm. Finally, the gross load forecast is calculated by averaging the gross load of the selected neighbors at every time-step of the forecast horizon~\cite{Zhang2016}. \textcolor{black}{The root mean square error (RMSE), mean absolute error (MAE), and mean bias error (MBE) for the gross load forecast for the entire year are $22.4$~kW, $17.3$~kW, and $-1.2$~kW respectively.} 

The PV generation forecast for the upcoming $24$~h utilizes the kNN (with $k\!=\!30$) algorithm as well. 
The feature vector is formed by three data sets: numerical weather prediction (NWP) model forecasts for the upcoming $24$~h, the average of the preceding $1$, $2$, $3$, and $4 $~h PV power generation, and the current time of the day. The NWP forecast utilized was the High-Resolution Rapid Refresh (HRRR) model developed by NOAA~\cite{HRRRpaper}. 
The PV power generation dataset was obtained 
by running simulations for the PV plant in the Solar Advisor Model (SAM) 
using the irradiance observation data obtained from the NSRDB database as an input. The training sample for PV power generation forecast includes NWP forecast and irradiance observations between January $1$, 2019 to December $31$, 2019. \textcolor{black}{The RMSE, MAE and MBE for the PV forecast for the entire year are $20.3$~kW, $9.1$~kW, and $-0.8$~kW respectively.}


\subsection{Microgrid (MG) Model}\label{MG_model}
The system state $x(t)=$\:[$x_{1}(t)$] is the BESS state of charge (SOC). The control input is $\setlength\arraycolsep{3pt} u(t)\!=\!{\begin{bmatrix} u_1(t) &  u_2(t)\end{bmatrix}}^{\top}$ where $u_1(t)$ is the BESS dispatch power (primary control source for handling uncertainty), and $u_2(t)$ is the grid import power (secondary control source for handling uncertainty). $u_{1}(t)>0$ denotes charging, while $u_{2}(t)>0$ denotes power import from the main grid to the MG. The PV generation and gross load is denoted by $\text {PV}(t)$ and $L(t)$ respectively and are used as forecast inputs to the MPC. The uncertainty $w(t)=$\:[$w_{1}(t)$] is the difference between the gross load and PV generation forecast uncertainties, i.e., $w_{1}(t)= \bigl({L}^{{f}}(t)- {L}^{{r}}(t)\bigr) - \bigl(\text {PV}^{f}(t)-\text {PV}^{{r}}(t)\bigr) $ where the superscript $f$ and $r$ denote forecasted and real values. The MPC prediction horizon is one-day ahead, subdivided into $N=96$ equal time-steps of $\Delta t=0.25$ h (15 minutes) each.

The system matrices are $A=[1]$, $B={\begin{bmatrix} \frac{\Delta t}{\text{BESS}_\text{en}} & \!0 \end{bmatrix}}$, and $E=[\frac{\Delta t}{\text{BESS}_\text{en}}]$ where $\text{BESS}_\text{en}$ is the energy capacity of the BESS. The system matrices handle the SOC update of the battery due to charging/discharging. For the hard control input constraints, $S=\begin{bmatrix}
1 & 0 \\
-1 & 0 
\end{bmatrix}$ and $\setlength\arraycolsep{3pt} s={\begin{bmatrix} \text {BESS}_{\text {max}} & \text {BESS}_{\text {max}} \end{bmatrix}}^{\top}$, which constrains the maximum charging/discharging power of the BESS. For the time varying equality constraints coupling the control inputs, $\setlength\arraycolsep{3pt} M={\begin{bmatrix} 1 & -1 \end{bmatrix}}$, $ \textcolor{black}{c(t)}=[\text{PV}^f(t)-L^f(t)]$, and $F=[1]$, which ensures power balance of the MG with the main grid. 
The MPC also has a terminal state constraint defined as $x_1(t\!+\!N|t)\geq\hat{x}_1$.

For this case study, we consider joint chance constraints (JCC) on the BESS SOC which considers a violation if the BESS SOC goes above or below predefined upper ($\text {SOC}_{\text {max}}$) and lower bounds ($\text {SOC}_{\text {min}}$) in closed-loop. The goal is to reduce electricity import costs from the grid by having controlled violations beyond the predefined upper and lower bounds by making a larger BESS capacity available for dispatch. Unrestricted violations are avoided as they can adversely affect the BESS lifetime. The chance constraints are defined by $G={\begin{bmatrix} 1 & \!-1 \end{bmatrix}}^{\top}$, $\setlength\arraycolsep{3pt} g={\begin{bmatrix} \text {SOC}_{\text {max}} & -\text {SOC}_{\text {min}} \end{bmatrix}}^{\top}$ and $\setlength\arraycolsep{3pt} h(t)={\begin{bmatrix} h_{1}(t) &  h_{2}(t)\end{bmatrix}}^{\top}$. We choose $h_1(t)=h_2(t)$ for adapting both the state constraints simultaneously by the same parameter as they are setup in JCC form. The maximum probability of the JCC violation is predefined by $\alpha$. For the traditional \textcolor{black}{EMPC} (without chance constraints), violations are avoided and the state constraints are formulated as~\eqref{trad_state}, while for the OA-SMPC the state constraints are formulated as~\eqref{chance_mpc_now}.
\begin{equation} \label{trad_state}
\begin{aligned}
Gx(t\!+\!k|t) \leq g \qquad \forall k\in \mathbb{N}_{1}^{N},\,\forall t.
\end{aligned}
\end{equation}

In the JCC formulation, when updating $h(t)$ by~\eqref{h_update_2}, it is ensured that $h_1(t)=\max\bigl(\text{SOC}_{\text {max}}\!-\!1,h_1(t)\bigr)$, and $h_2(t)=\max\bigl(-\text{SOC}_{\text {min}},h_2(t)\bigr)$ to ensure the state constraints do not violate physical limits of SOC above 1 or below 0. \textcolor{black}{However, in our case study, $h_i(t)$ 
given by~\eqref{h_update_2} never violate physical limits, obviating the above correction. We have practically ensured this by setting a high value of $\gamma_i$ in~\eqref{h_update_2}, which is a design choice, at the cost of slower system adaptation (i.e., rate of change of $h_i(t)$), and setting the initial constraint relaxation parameter $h_i(0)$ such that $g_i-h_i(0)$ is sufficiently far away from physical limits for $i \in \{1,2\}$.}

The objective function is formulated as in~\cite{ghosh2023adaptive,ghosh2022effects}, and is given by,
\begin{align} \label{objective}
&J(t)\!=\! R_\text{NC}\! \max\{u_2(t\!+\!k|t)\}_{k=0}^{N\!-\!1}\!+\!
R_\text{OP}\! \max\{u_2(t\!+\!l|t)\}_{l \in \mathbb{I}(t)}  \nonumber \\
&+ R_{\text{EC}}\Delta t\Biggl[\sum_{k=0}^{N-1}{u_2}(t\!+\!k|t) + \frac{1-\eta}{2}\sum_{k=0}^{N-1}|{u_1}(t\!+\!k|t)|\Biggr],
\end{align}
where ${R_{\text{NC}}}$ is the non-coincident demand charge (NCDC) rate charged on the maximum grid import during the prediction horizon. Similarly, ${R_{\text{OP}}}$ is the on-peak demand charge (OPDC) rate, charged on the maximum grid import between $16$:$00$ and $21$:$00$ h, called on-peak (OP) hours of the prediction horizon, and $\mathbb{I}(t)$ represents indices of prediction horizon time-steps coinciding with the OP hours. Naturally, the indices of OP hours in the prediction horizon are a function of the starting time-step $t$ of the MPC prediction horizon. ${R_{\text{EC}}}$ is the energy charge rate and $\eta$ is the round-trip efficiency of the BESS accounting for BESS losses. After the net load for the entire month is realized, the monthly NCDC is computed based on maximum load demand from the grid during the month, while the monthly OPDC is computed based on maximum load demand between $16$:$00$ and $21$:$00$~h of all days of the month. The predefined parameters of the MG for the OA-SMPC operation are shown in Table~\ref{parameters} and a block diagram of the MG operational framework is shown in Fig.~\ref{MPC_framework}. \textcolor{black}{Note that the one step controllability in Assumption~\ref{assumption_control_inputs}(b), and ideal control policy in Assumption~\ref{assumption_ideal_control_policy} is relaxed for the case study for realistic simulations.} The satisfaction of Assumption~\ref{assumption_control_inputs}(a) is ensured by choosing a sufficiently large $\gamma_i$ which constrains the rate of $h_i(t)$ increase depending on the available control input power to satisfy~\eqref{eq_3}. Note that due to Assumption \ref{assumption_post_process}(a), in our case study, the risk of violation of Assumption~\ref{assumption_control_inputs}(a) can only arise when $h_i(t)$ increases.\footnote{\textcolor{black}{The maximum absolute forecast uncertainty throughout the year for the case study is 261 kW, which still is lesser than half of the BESS power capacity of 700 kW. Thus, even in the case where Assumption~\ref{assumption_post_process}(a) is relaxed, and the BESS is set up to handle all the uncertainty until reaching the physical limits, the practical viability of Assumption~\ref{assumption_control_inputs}(a) is reinforced.}}
 \vspace{-1em}

\begin{table}[ht]
\centering
\caption{\textcolor{black}{Design Parameters of the OA-SMPC for the MG.}}
\resizebox{\columnwidth}{!}{%
{\begin{tabular}{lcc}
\cline{1-3}
Parameter            & Symbol              &  Value  \\ 
\cline{1-3}
NCDC  rate           & $R_\text{NC}$       & \$24.48/kW        \\
OPDC  rate           & $R_\text{OP}$       & \$19.19/kW       \\
Energy rate          & $R_\text{EC}$       & \$0.1/kWh \\
BESS round-trip efficiency      & $\eta$                &  0.8  \\
BESS energy capacity       & $\text {BESS}_\text{en}$     & 2,500 kWh \\
BESS power capacity  & $\text {BESS}_\text{max}$        &700 kW     \\
Upper bound of SOC for \\
traditional \textcolor{black}{EMPC 1}  & $\text {SOC}_\text{max}$          & 0.8    \\
Lower bound of SOC for \\
traditional \textcolor{black}{EMPC 1}   & $\text {SOC}_\text{min}$          & 0.2    \\
Maximum violation probability       & $\alpha$     & 0.1 \\
Terminal state constraint  & $\hat{x}_1$        &       0.5     \\
Initial state  & $x_1(0)$        &       0.5     \\
Initial constraint relaxing parameter  & $h(0)$        &       $\setlength\arraycolsep{3pt}\!{\begin{bmatrix} -0.1 & \! -0.1\end{bmatrix}}^{\top}$      \\
Proportionality constant       & $\gamma_1$ and $\gamma_2$     & 15 \\
\cline{1-3}\\
\end{tabular}}
}
\label{parameters}
\vspace{-2em}
\end{table}
\vspace{-0.25em}
\subsection{Operation Strategy}\label{strategy}

\textcolor{black}{This section presents the real-time operation strategy of the OA-SMPC for the economic MG dispatch probem under consideration.} 
Although~\eqref{h_update_2} updates $h_i$ at every time-step, given the emphasis on additional OPDC penalties in our cost function, whenever the starting time-step of the MPC coincides with daily OP hours, we restrict the $h_i$ increase between two time steps, overriding~\eqref{h_update_2} when required. Decreasing $h_i$ during the daily OP hours is still allowed, and if $h_i$ increases it is reset manually to the previous value. The reasoning behind restricting the increase of $h_i$ during OP times is to avoid the additional OPDC costs (on top of NCDC) by incurring preferential violations during OP hours (by deeper BESS discharge due to relaxed state constraints). The preferential OP hour violations may lead to state violations beyond the maximum allowable limit temporarily, but the adaptive rule (by virtue of $Y>\alpha$) pushes the BESS to compensate by lowering violations due to rapid increase of $h_i$ during other times (when risk of penalty on peaks is lower). The simulations are carried out in CVX, a package for solving convex programs in the MATLAB environment~\cite{cvx_1,cvx_2}. 

\section{Results and Discussion}\label{results}

We compare results of yearly simulations from 4 test cases: (i) Traditional \textcolor{black}{EMPC} 1 described in Section~\ref{MG_model} to demonstrate the case without chance constraints; (ii) OA-SMPC which is our proposed method described in Section~\ref{MG_model} to demonstrate the case with chance constraints on state; \textcolor{black}{(iii) SMPC Lit from ~\cite{IEEE_1} which is similar in computational cost to our proposed method, with the chance constraints being represented by~\eqref{chance_mpc_previous_2}, a corresponding adaptive constraint tightening rule given by $\tilde{h}_i(t)=\tilde{h}_i(t-1)\biggl[1-\frac{\alpha_i-Y_i(t)+\frac{2\alpha_i-1}{2t}}{\gamma_i}\biggr]$, with $\tilde{h}_i(t)>0,  \; \forall t$, $i \in \{1,2\}$, and $\tilde{h}_1(t)=\tilde{h}_2(t)$. As~\cite{IEEE_1} only tightens the state constraints, violation is allowed in closed-loop by changing the feasible state set in the post-processing step in Assumption \ref{assumption_post_process}(a) to the physical limits of the system (i.e., SOC limits of 0 and 1). Similar to the OA-SMPC, $\tilde{h}_i$ increase during OP hours is avoided by resetting it manually to its previous value. The design parameters for SMPC Lit are the same as that of OA-SMPC (see Table~\ref{parameters}), with $\tilde{h}(0)=-h(0)$}; (iv) Traditional \textcolor{black}{EMPC} 2 which modifies Case (i) with $\setlength\arraycolsep{3pt} g\!=\!{\begin{bmatrix} \text {SOC}_{\text {max}}\!-\!h_1(0) & -\text {SOC}_{\text {min}}\!-\!h_2(0) \end{bmatrix}}^{\top}$ to demonstrate the MG performance with the state constraints always relaxed by the same initial relaxing parameter in Case (ii) to violate $\text {SOC}_{\text {max}}/\text {SOC}_{\text {min}}$ in closed-loop but without adaptation.

\textcolor{black}{From the PV and gross load forecasting errors, we find that the mean and standard deviation (SD) 
of the uncertainty is $-0.5$~kW and $30.3$~kW, 
respectively for the entire year of 2019. A greater (lesser) value of mean uncertainty would push the states higher (lower) in general causing more constraint violations due to the BESS exceeding (falling below) the $\text{SOC}_{\text {max}}$ ($\text{SOC}_{\text {min}}$). A higher SD of the uncertainty would cause the closed-loop behavior of the system to differ significantly from the open-loop solutions given by~\eqref{mpc_general} which in addition to increasing the likelihood of violations may also compel expensive BESS dispatch to handle the uncertainty and satisfy Assumption~\ref{assumption_control_inputs}(a).} 

\vspace{-1em}
\begin{table}[ht]
\centering
\caption{Results for the 4 test cases for the year 2019. Monthly costs are added to compute the yearly cost.}
\resizebox{\columnwidth}{!}{\begin{tabular}{lrrrr}
\cline{1-5}
Costs            & Traditional 1   &  OA-SMPC &  SMPC Lit~\cite{IEEE_1} &Traditional 2   \\ 
\cline{1-5}
NCDC             & \$66,586     & \$64,579 &  \$66,960    &   \$57,690   \\
OPDC             & \$1,959      & \$1,274  &  \$2,087    &    \$1,148    \\
Energy Cost      & \$14,382     & \$14,385 &  \$14,382     & \$14,399\\
BESS loss        & \$8,290      & \$9,071  &  \$8,309   &  \$10,447  \\
Total Cost       & \$91,217     & \$89,309 &  \$91,738     &  \$83,683 \\
Total BESS cycles &165.8        &181.4         &166.2   &208.9   \\
$Y$ at year end    & 0\%          & 10.1\%    & 2.2\%  & 20.4\%   \\
\cline{1-5}\\
\end{tabular}}
\label{Year_Table}
\vspace{-3em}
\end{table}

\begin{figure}[ht]
\includegraphics[scale=0.082]{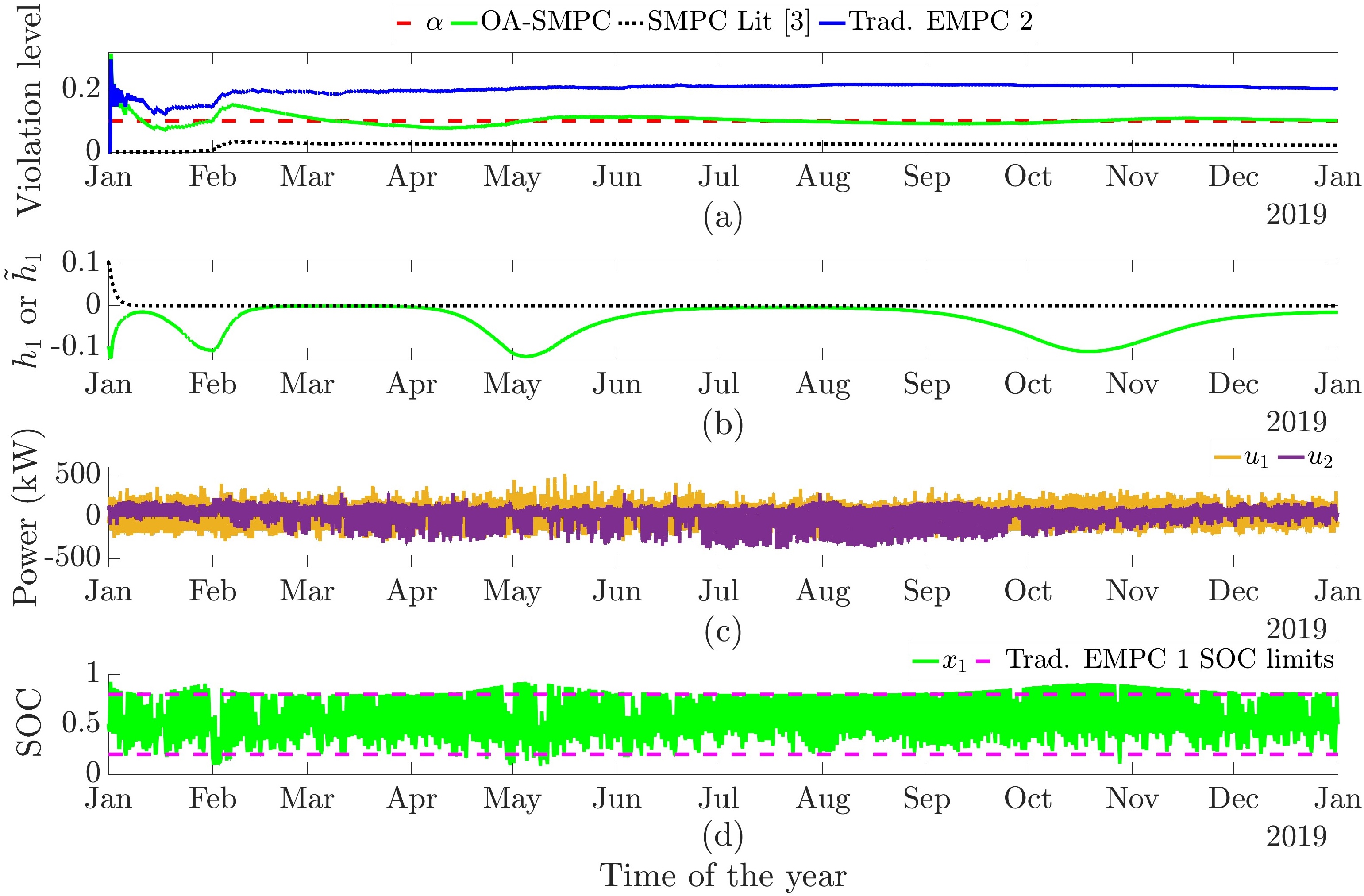}
\vspace{-2em}
\caption{\textcolor{black}{{Yearly time-series for the: (a) time-average of state constraint violations ($Y$) for the OA-SMPC, SMPC Lit~\cite{IEEE_1}, and Traditional EMPC 2 case studies, and the maximum allowable state constraint violation probability ($\alpha=0.1$), (b) adaptive state constraint relaxing parameters ($h_{1}$ and $\tilde{h}_1$, with $h_1\!=\!h_2$, and $\tilde{h}_1=\tilde{h}_2$ in these case studies) for the OA-SMPC and SMPC Lit, (c) closed-loop behavior of the control inputs for the OA-SMPC, (d) closed-loop behavior of the state for the OA-SMPC.}}}
\label{Year_Figure}
\end{figure}

Figure~\ref{Year_Figure} and Table~\ref{Year_Table} summarize the simulation results of the 4 test cases for the year 2019. Note that the energy costs are similar across all cases as there is no arbitrage in the MG model. The minute differences in energy costs are due to the different ending BESS SOC at the end of the year for the different test cases. 

The comparison between the \textcolor{black}{Traditional EMPC 1 and OA-SMPC} demonstrates the superior economic performance of our proposed algorithm 
while still staying within the maximum allowable violation probability bound. \textcolor{black}{NCDC and OPDC decrease by 3\% and 35\% between Traditional EMPC 1 and OA-SMPC.} The OPDC savings (as a \%) are significantly more than NCDC because of the operation strategy (see Section~\ref{strategy}) allowing preferential violations during OP hours. \textcolor{black}{As extra BESS capacity is available for the OA-SMPC as compared to the Traditional EMPC 1 due to the adaptive relaxation of BESS SOC constraints beyond the $0.2-0.8$ SOC range, the BESS is used more aggressively in OA-SMPC resulting in 9.4\% more BESS losses amounting to 15.6 extra BESS yearly cycles. Overall, OA-SMPC leads to 2.1\%  total yearly cost savings as compared to the Traditional EMPC 1. SMPC Lit performs the worst as it tightens the feasible SOC range and leads to highest yearly cost. Some violations are caused due to the effect of uncertainty in the closed-loop but it is unable to reduce costs as the timing of the uncertainty is critical in reducing demand charges. The NCDC and OPDC are greater in SMPC Lit as compared to both the Traditional EMPC 1 and OA-SMPC. SMPC Lit has similar BESS cycles as the Traditional EMPC 1 with 0.6\% extra yearly costs. This serves as a proof of the concept alluded to in Remark~\ref{Approx_Chance_Real} that adaptive constraint tightening methods (like SMPC Lit) can be over-conservative being unable to exploit the allowable violation limit which can be particularly disadvantageous in EMPC frameworks.} 

\textcolor{black}{Figure~\ref{Year_Figure}(a) shows the variation of $Y$ for the OA-SMPC (green line), and SMPC Lit (black dotted line) for the entire year. For the OA-SMPC, initially, $Y$ is $0$ until the first constraint violation occurs, after which it overshoots $\alpha$, oscillating with a high frequency and magnitude until the first week of January. Then, consistent with the goal of $Y$ converging to a value less than or equal to $\alpha$, $Y$ decreases and oscillates about $\alpha$ with lower frequency and magnitude, as the system evolves, and finally reaching a value of 10.1\% at the end of the year (which is within a small margin of $\alpha$). It is expected for $Y$ to decrease below $\alpha$ if the OA-SMPC is run over a longer time period, resulting in satisfying the chance constraints, albeit non-conservatively, in closed-loop.} 
Figure~\ref{Year_Figure}(b) shows that for the \textcolor{black}{OA-SMPC}, $h_1$ and $h_2$ increase and decrease depending on the overshoot and undershoot of $Y$ with respect to $\alpha$ respectively, being able to expand and contract the feasible state set accordingly to keep $Y$ near $\alpha$. \textcolor{black}{Figures~\ref{Year_Figure}(c) and~\ref{Year_Figure}(d) demonstrate the yearly closed-loop behavior of the BESS dispatch and grid import, and SOC respectively for the OA-SMPC. For a significant portion of the year, the grid import is negative because the PV system at the location is oversized generating excess power which is fed back to the grid. Significant violation of constraints can be seen in Figure~\ref{Year_Figure}(d) in the months of May and October, which are the two months that aid most is cost savings.} 

\textcolor{black}{Figure~\ref{Year_Figure}(b) also shows that for the SMPC Lit, the $\tilde{h}_1$ and $\tilde{h}_2$ quickly drop down to 0 in the first week of January trying to encourage violations initially as the initial constraint tightening was too conservative. However, the system's $Y$ still stays far below $\alpha$, with the result that $\tilde{h}_1$ and $\tilde{h}_2$ stay close to $0^+$ throughout the year behaving essentially like the Traditional EMPC 1 in the nominal MPC with allowance for SOC limits to range from 0 to 1 in closed-loop. In the months of February and March (not shown in Figures), SMPC Lit creates more NCDC than Traditional EMPC 1, because just before the NCDP time, a high uncertainty forces the SOC to go below the lower limit of 0.2, which causes a large BESS charging action at the next time-step to climb back to the feasible SOC range of above 0.2 causing demand peaks, a problem alluded to in Remark~\ref{uncertainty}.    }

Traditional \textcolor{black}{EMPC 2} demonstrates that relaxing the state ($\text{BESS SOC}$) constraints by the same initial relaxation parameter as in \textcolor{black}{OA-SMPC} without any adaptation causes the violations to exceed the maximum allowable violation probability bound in closed-loop (blue line in Fig.~\ref{Year_Figure}(a)). The Traditional \textcolor{black}{EMPC 2} has lower total yearly electricity costs (due to lower NCDC and OPDC), than the \textcolor{black}{OA-SMPC} because the objective function in~\eqref{objective} penalizes peaks in grid import power ($u_2$) more than BESS dispatch power ($u_1$), resulting in prolonged aggressive dispatch of the BESS to serve the net load ($L-\text{PV}$). The aggressive BESS dispatch is more prolonged in Traditional \textcolor{black}{EMPC 2} than \textcolor{black}{OA-SMPC} because the feasible BESS SOC range in Traditional \textcolor{black}{EMPC 2} is not adapted due to past violations, and the feasible state set continues being larger than \textcolor{black}{OA-SMPC} throughout. The Traditional \textcolor{black}{EMPC 2} shows higher frequency of oscillation and magnitude of overshoot of $Y$ above $\alpha$ as compared to \textcolor{black}{OA-SMPC}, and additionally shows that $Y$ consistently remains above $2\alpha$ for the majority of the year. Thus, the Traditional \textcolor{black}{EMPC 2} is unable to fulfill the chance constraints and may significantly harm BESS life, resulting in $27.5$ more yearly BESS cycles as compared to \textcolor{black}{OA-SMPC}. The analysis serves as a proof of concept that it is the adaptive relaxation rule in \textcolor{black}{OA-SMPC} rather than the nature of uncertainties which ensure non-conservative chance constraint satisfaction in closed-loop in \textcolor{black}{OA-SMPC}.

\textcolor{black}{Figure~\ref{Year_Figure_alpha} and Table~\ref{Year_Table_alpha} present the results of the OA-SMPC analysis when it is repeated with different values of $\alpha\in\{0.05, 0.15, 0.2\}$, with the same other design parameters as in Table~\ref{parameters}. Figure~\ref{Year_Figure_alpha} shows that in all the cases, similar behavior is observed as in Fig.~\ref{Year_Figure}(a) with $Y$ oscillating about $\alpha$ with lower frequency and magnitude as the system evolves with time along the year while ultimately settling at a value below $\alpha$ at year-end, thereby tightly satisfying the chance constraints. Table~\ref{Year_Table_alpha} demonstrates that superior cost savings are attained when $\alpha$ increases at the cost of more BESS cycles. Comparing the case with $\alpha=0.2$ in Table~\ref{Year_Table_alpha} with Traditional EMPC 2 in Table~\ref{Year_Table} shows that the OA-SMPC is more effective at reducing costs with less violations and BESS cycles than the Traditional EMPC 2 due to the OA-SMPC's online adaptivity.}

\vspace{-1em}
\begin{table}[ht]
\centering
\caption{\textcolor{black}{Results for the OA-SMPC with different values of maximum probability of violation of state constraints ($\alpha$) for the year 2019.} 
}\vspace{-0.8em}
\resizebox{0.7\columnwidth}{!}{\begin{tabular}{lrrrr}
\cline{1-4}
Costs            &  $\alpha=0.05$   &  $\alpha=0.15$ &  $\alpha=0.2$ \\ 
\cline{1-4}
NCDC             & \$66,294     & \$59,564 &  \$57,323     \\
OPDC             & \$1,865      & \$1,036  &  \$540      \\
Energy Cost      & \$14,383     & \$14,395 &  \$14,395     \\
BESS loss        & \$8,473      & \$9,747  &  \$10,336      \\
Total Cost       & \$91,014     & \$84,742 &  \$82,593     \\
Total BESS cycles &169.5        &194.9         &206.7      \\
$Y$ at year end    & 4.5\%          & 14.3\%    & 19.1\%      \\
\cline{1-4}\\
\end{tabular}}
\label{Year_Table_alpha}
\vspace{-2em}
\end{table}

\begin{figure}[ht]
\includegraphics[scale=0.082]{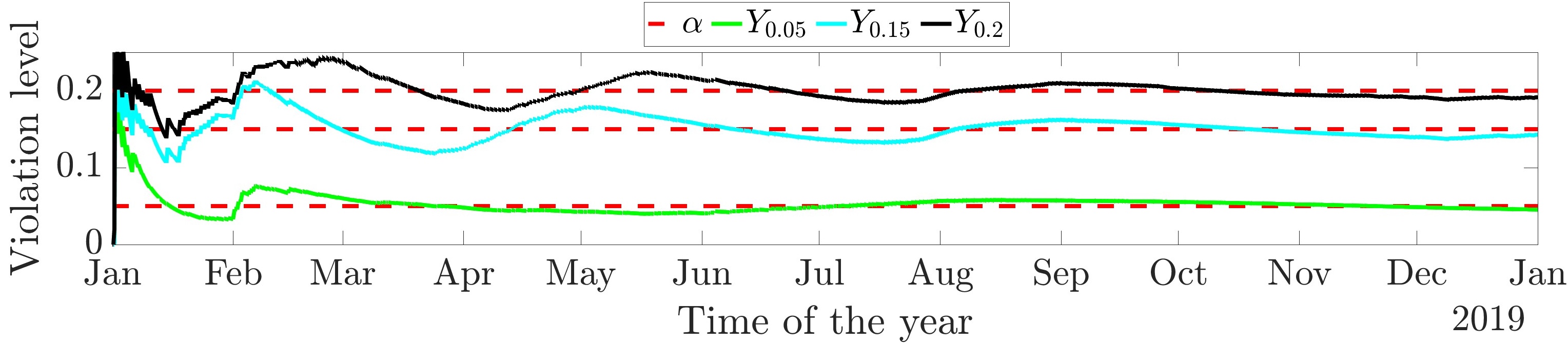}
\vspace{-2em}
\caption{\textcolor{black}{{Yearly time-series for the time-average of state constraint violations ($Y_{\alpha}$) for the OA-SMPC for the maximum allowable state constraint violation probability $\alpha\in\{0.05,0.15,0.2\}.$}}}
\vspace{-1em}
\label{Year_Figure_alpha}
\end{figure}

\section{Conclusions and Future Work}\label{conclusions}

This work presents a novel online adaptive state constraint relaxation based stochastic MPC (OA-SMPC) framework for non-conservative chance constraint satisfaction in closed-loop. An adaptive state constraint relaxation rule is developed for a generic discrete LTI system based on the time-average of past constraint violations without any a-priori assumptions about the probability distribution of the uncertainty set or its statistics, or sample uncertainties from historical data. The time-average of the state constraint violations, under the assumption of ideal control inputs which can cause/prevent constraint violations almost surely, is proven to asymptotically 
converge to the maximum allowable violation probability. \textcolor{black}{The time-average of the state constraint violations is also proven to exhibit martingale-like behavior asymptotically, even without the ideal control input assumption.}  

\textcolor{black}{The proposed method (OA-SMPC) is applied for minimizing monthly electricity costs by optimal BESS dispatch for a grid connected microgrid (MG) in an economic MPC (EMPC) framework. We perform simulations for the Port of San Diego MG using realistic PV and load forecast data for the year 2019. Chance constraints are applied on the BESS SOC to make use of excess BESS capacity in our proposed OA-SMPC as compared to the traditional \textcolor{black}{EMPC without chance constraints} (which uses hard constraints on BESS SOC). 
The OA-SMPC outperforms the traditional \textcolor{black}{EMPC and a state-of-the-art chance constrained approach from the literature}, striking an effective trade off between high BESS utilization (i.e., higher cost savings) and full SOC constraint satisfaction (i.e., longer BESS lifetime). The OA-SMPC lowers MG electricity costs by non-conservative chance constraint satisfaction in closed-loop, thereby having minimal adverse effect on BESS lifetime.} Future work will incorporate the BESS degradation cost and a life-cycle analysis of the MG.




\renewcommand{\theequation}{A.\arabic{equation}}
  \setcounter{equation}{0}  
  
{\appendix[
]\label{appendix_main}

\textcolor{black}{
A test for Assumption \ref{assumption_control_inputs}(a) entails 
solving the following optimization problem~\cite[Section 5.8.1]{boyd2004convex} at each time $t$, given $x(t)$, $c(t|t)$ and $h(t)$
.} 


\vspace{-1.5em}
\textcolor{black}{
\begin{equation}
\label{eq_1}
\begin{aligned}
{f^*} = {\min}\ {0},
\end{aligned}
\end{equation}
\vspace{-1.5em}
subject to
\begin{equation} \label{eq_1_1}
\begin{aligned}
x(t+1|t) = Ax(t)+Bu(t|t),
\end{aligned}
\end{equation}
\vspace{-1.5em}
\begin{equation} \label{eq_2}
\begin{aligned}
Su(t|t) \leq s,
\end{aligned}
\end{equation}
\vspace{-1.5em}
\begin{equation} \label{eq_2_2}
\begin{aligned}
Mu(t|t) =c(t|t), 
\end{aligned}
\end{equation}
\vspace{-1.5em}
\begin{equation} \label{eq_3}
\begin{aligned}
G(x(t+1|t)) \leq g-h(t).
\end{aligned}
\end{equation}
If $f^* = 0$, 
then the optimization problem defined above is feasible and we can guarantee Assumption \ref{assumption_control_inputs}(a) holds. If $f^* = \infty$, then the optimization problem above is infeasible and Assumption \ref{assumption_control_inputs}(a) fails to hold. Note that if~\eqref{equal_real} and~\eqref{equality_compact_nominal} are not considered in the problem formulation, we can drop~\eqref{eq_2_2}.}

}

\section*{\textcolor{black}{Acknowledgments}}
\textcolor{black}{The authors would like to thank Dr. Sonia Mart\'{\i}nez, Professor of Mechanical and Aerospace Engineering at UC San Diego, for the fruitful discussions during writing the manuscript. The authors would also like to extend their deepest gratitude to the anonymous reviewers whose comments greatly improved the content of the paper.}

\bibliographystyle{IEEEtran}
\bibliography{IEEE_Journal}

\end{document}